\newcommand{\AUTHORS}{Kanat Tangwongsan, A. Pavan,  and Srikanta Tirthapura}
\newcommand{\TITLE}{Parallel Triangle Counting in Massive Streaming Graphs}
\newcommand{\KEYWORDS}{Put your keywords here}
\newcommand{\PAGENUMBERS}{yes}       
\newcommand{\COLOR}{yes}
\newcommand{\showComments}{yes}
\newcommand{\comment}[1]{}
\newcommand{\remove}[1]{}
\let\chapter\section  
\setlist{itemsep=0pt,parsep=0pt}             
\newenvironment{mlcodea}{
\begingroup
\ttfamily
\small
\begin{tabbing}
==\===\===\===\===\===\===\
\kill}
{\end{tabbing}
\endgroup}
\DeclareSymbolFont{cmlargesymbols}{OMX}{cmex}{m}{n}  
\let\sum\relax
\DeclareMathSymbol{\sum}{\mathop}{cmlargesymbols}{"50}
\definecolor{placeholderbg}{rgb}{0.85,0.85,0.85}
\newcommand{\adjdeg}{\chi\xspace}
\newcommand{\defn}[1]{\textbf{#1}}
\newcommand{\MM}{M}
\newcommand{\BB}{B}
\newcommand{\CoQ}[1]{Q\left({#1}\right)}
\newcommand{\etal}{et~al.\xspace}
\newcommand{\Nbr}{\Gamma}
\newcommand{\stream}{\mathcal{S}}
\newcommand{\tcount}{\tau}
\newcommand{\tri}{\mathcal{T}}
\newcommand{\rank}{\mathop{\mathrm{rank}}}
\newcommand{\procName}[1]{{\normalfont \texttt{#1}}}
\newcommand{\scanAlg}{\procName{scan}}
\newcommand{\sortAlg}{\procName{sort}}
\newtheorem{observation}[theorem]{Observation}
\newcommand{\Qt}{\texttt{t}}
\newcommand{\Qb}{\texttt{b}}
\newcommand{\Qs}{\texttt{s}}
\newcommand{\Qc}{\texttt{c}}
\newcommand{\cc}{Q^{*}}
\newcommand{\loc}{\textit{loc}}
\renewcommand{\otilde}{\tilde{O}}
\newcommand{\tspc}{\mbox{\;\;\;}}
\newcommand{\note}[2]{
    \ifthenelse{\equal{\showComments}{yes}}{\textcolor{#1}{#2}}{}
}
\date{} \title{\TITLE\footnote{Contact Address: 1101 Kitchawan Rd, Yorktown
    Heights, NY 10598. E-mail: \texttt{ktangwo@us.ibm.com, pavan@cs.iastate.edu,
      snt@iastate.edu}}}
\author{{\large Kanat Tangwongsan$^\dag$ \qquad A. Pavan$^\ddag$ \qquad Srikanta Tirthapura$^\ddag$}\\[0.5em]
  {\it \mbox{}$^\dag$IBM T.J. Watson Research Center and \mbox{}$^\ddag$Iowa State University}}
\theoremstyle{definition}
\newtheorem{invariant}[theorem]{Invariant}
\begin{document}
\maketitle

\begin{abstract}
  The number of triangles in a graph is a fundamental metric, used in social
  network analysis, link classification and recommendation, and more.  Driven by
  these applications and the trend that modern graph datasets are both large and
  dynamic, we present the design and implementation of a fast and
  cache-efficient parallel algorithm for estimating the number of triangles in a
  massive undirected graph whose edges arrive as a stream.  It brings together
  the benefits of streaming algorithms and parallel algorithms.
  By building on the streaming algorithms framework, the algorithm has a small
  memory footprint.  By leveraging the paralell cache-oblivious framework, it
  makes efficient use of the memory hierarchy of modern multicore machines
  without needing to know its specific parameters. We prove theoretical bounds
  on accuracy, memory access cost, and parallel runtime complexity, as well as
  showing empirically that the algorithm yields accurate results and substantial
  speedups compared to an optimized sequential implementation.

  \medskip

  \emph{(This is an expanded version of a CIKM'13 paper of the same title.)}




\end{abstract}


\section{Introduction}
\label{sec:intro}

The number of triangles in a graph is an important metric in social network
analysis~\cite{WF94,Newman:siamreview03}, identifying thematic structures of
networks~\cite{EckmannE:pnas2002}, spam and fraud detection~\cite{BBCG08}, link
classification and recommendation~\cite{TsourakakisDMKFL:snam11}, among
others.
Driven by these applications and further fueled by the growing volume of graph
data, the research community has developed many efficient algorithms for
counting and approximating the number of triangles in massive graphs.  
There have been several streaming algorithms for triangle counting,
including~\cite{BKS02,JG05,BFLS07,MMPS11,KMSS12,PavanTTW:vldb2013,JhaSP:arxiv12}.
Although these algorithms are well-suited for handling graph evolution, they
cannot effectively utilize parallelism beyond the trivial ``embarrassingly
parallel'' implementation, leaving much to be desired in terms of performance
(we further discuss this below).  On the other hand, there have been a number of
parallel algorithms for triangle counting, such
as~\cite{SV11,Cohen:cse09}. While these algorithms excel at processing static
graphs using multiple processors, they cannot efficiently handle constantly
changing graphs.
Yet, despite indications that modern graph datasets are both massive and
dynamic, \emph{none of the existing algorithms can efficiently handle graph
  evolution and fully utilize parallelism at the same time.}


In this paper, we present a fast shared-memory parallel algorithm for
approximate triangle counting in a high-velocity streaming graph.
Using streaming techniques, it only has to store a small fraction of the graph
but is able to process the graph stream in a single pass.  Using
parallel-computing techniques, it was designed for modern multicore architecture
and can take full advantage of the memory hierarchy.  The algorithm is
versatile: It can be used to monitor a streaming graph where high throughput and
low memory consumption are desired. It can also be used to process large static
graphs, reaping the benefits of parallelism and small memory footprints.
As is standard, our algorithm provides a randomized relative-error approximation
to the number of triangles: given $\vareps, \delta \in [0, 1]$, the estimate
$\hat{X}$ is an $(\vareps, \delta)$-approximation of the true quantity $X$ if
$|\hat{X} - X| \leq \vareps X$ with probability at least $1 - \delta$.

We have used the following principles to guide our design and implementation:
First, we work in the limited-space streaming model, where only a fraction of
the graph can be retained in memory.  This allows the algorithm to process
graphs much larger than the available memory. Second, we adopt the \emph{bulk
  arrival} model.  Instead of processing a single edge at a time, our algorithm
processes edges one batch at a time. This allows for
processing the edges of a batch in parallel, which results in higher throughput.
Third but importantly, we optimize for the memory hierarchy. The memory system
of modern machines has unfortunately become highly sophisticated, consisting of
multiple levels of caches and layers of indirection. Navigating this complex
system, however, is necessary for a parallel implementation to be efficient.  To
this end, we strive to minimize cache misses. We design our algorithm in the
so-called cache-oblivious framework~\cite{FLPR99, BGS10}, allowing the algorithm
to make efficient use of the memory hierarchy but without knowing the specific
memory/cache parameters (e.g., cache size, line size).


\medskip
\noindent\textbf{Basic Parallelization:} 
Before delving further into our algorithm, we discuss two natural
parallelization schemes and why they fall short in terms of performance.
For the sake of this discussion, we will compare the PRAM-style (theoretical)
time to process a stream of $m$ total edges on $p$ processors, assuming a
reasonable scheduler.


As a concrete running example, we start with the sequential triangle counting
algorithm in~\cite{PavanTTW:vldb2013}, which uses the same neighborhood
sampling technique as our algorithm.  Their algorithm---and typically every
streaming algorithm for triangle counting---works by constructing a ``coarse''
estimator with the correct expectation, but with potentially high
variance. Multiple independent estimators are then run in parallel, and their
results are aggregated to get the desired $(\vareps, \delta)$-approximation.
This suggests a natural approach to parallelizing this algorithm (aka. the
``embarrassingly parallel'' approach): treat these independent estimators as
independent processes and update them simultaneously in parallel.  In this
\emph{na\"ive parallel} scheme, if there are $r$ estimators, the time on $p$
processors is $T_p = O(rm/p)$ and the total work is $O(rm)$ assuming each edge
update takes $O(1)$ time.  The problem is that the total work---the product
$r\cdot m$---can be prohibitively large even for medium-sized graphs, quickly
rendering the approach impractical even on multiple cores.

In the batch arrival model, there is an equally simple parallelization scheme
that bootstraps from a sequential batch processing algorithm.  Continuing with
our running example, the sequential algorithm in~\cite{PavanTTW:vldb2013} can
be adapted to bulk-process edges in a cache-efficient manner.  With the
modification, a batch of $s$ edges can be processed using $r=\Theta(s)$
estimators, in $O(s \log s)$ time per batch, assuming a cache-efficient
implementation.  In the second parallel scheme, which we term \emph{independent
  bulk parallel}, we $r/p$ estimators on each processor, and each processor uses
the bulk processing streaming algorithm to maintain the estimators assigned to
the processor. Since each processor has to handle all $m$ edge arrivals, this
leads to a parallel time of $T_p = \Theta(m\log s)$, and total work of
$\Theta(mp \log s)$.  Notice that in this scheme, the different parallel tasks
do not interact; they run independently in parallel.  While this appears to be
better than the na\"ive parallel approach, it suffers a serious drawback: the
total work increases linearly with the number of processors, dwarfing the
benefits of parallelization.

\subsection{Contributions}
The main contribution of this work is a shared-memory parallel algorithm, which
we call the \emph{coordinated bulk parallel} algorithm.
The key to its efficiency is the coordination between parallel tasks, so that we
do not perform duplicate work unnecessarily.
The algorithm maintains a number of estimators as chosen by the user and
processes edges in batches.  Upon the arrival of a batch, the algorithm (through
a task scheduler) enlists different cores to examine the batch and generate a
shared data structure in a coordinated fashion.  Following that, it instructs
the cores (again, via a task scheduler) to look at the shared data structure
and use the information they learn to update the estimators.
This is to be contrasted with the ``share-nothing'' schemes above.

We accomplish this by showing how to reduce the processing of a batch of edges
to simple parallel operations such as sorting, merging, parallel prefix, etc.
We prove that \emph{the processing cost of a batch of $s$ edges is
  asymptotically no more expensive than that of a cache-optimal parallel sort.}
(See Theorem~\ref{thm:pco-block-cost} for a precise statement.)  To put this in
perspective, this means that incorporating a batch of $s$ edges into $\Theta(s)$
estimators has a (theoretical) time of $T_p = \Theta(\log^2 s + \frac{s \log
  s}{p})$ assuming a reasonable scheduler.  Equivalently, using equal-sized
batches, the total work to process a stream of $m$ edges is $\Theta(m \log s)$.
This is $p$ times better than independent bulk parallel.  Furthermore, the
algorithm is cache oblivious, capable of efficiently utilizing of the caches
without needing to know the cache parameters


We also experimentally evaluate the algorithm.  Our implementation yields
substantial speedups on massive real-world networks.  On a machine with 12
cores, we obtain up to $11.24$x speedup when compared with a sequential version,
with the speedup ranging from $7.5$x to $11.24$x. In separate a stress test, a
large (synthetic) power-law graph of size 167GB was processed in about 1,000
seconds.  In terms of throughput, this translates to millions of edges per
second.


More broadly, our results show that it is possible to combine a small-space
streaming algorithm with the power of multicores to process massive dynamic
graphs of hundreds of GB on a modest machine with smaller memory. To our
knowledge, this is the first parallel small-space streaming algorithm for any
non-trivial graph property.

\subsection{Related Work}
\label{sec:related}
Approximate triangle counting is well-studied in both streaming and
non-streaming settings. In the streaming context, there has been a long line of
research~\cite{JG05,BFLS07,MMPS11,KMSS12}, beginning with the work of
Bar-Yossef~\etal\cite{BKS02}. Let $n$ be the number vertices, $m$ be the number
of edges, $\Delta$ be the maximum degree, and $\tau(G)$ be the number of
triangles in the graph.  An algorithm of~\cite{JG05} uses
$\otilde(m\Delta^2/\tau(G))$\footnote{The notation $\tilde{O}$ suppresses
  factors polynomial in $\log m, \log(1/\delta)$, and $1/\vareps$.} space
whereas the algorithm of \cite{BFLS07} uses $\otilde(mn/\tau(G))$ space. With
higher space complexity, \cite{MMPS11} and \cite{KMSS12} gave algorithms for the
more general problem of counting cliques and cycles, supporting insertion and
deletion of edges. A recent work~\cite{PavanTTW:vldb2013} presented an
algorithm with space complexity $\tilde{O}(m\Delta/\tau(G))$.  Jha
\etal~\cite{JhaSP:arxiv12} gave a $O(\sqrt{n})$ space approximation algorithm
for triangle counting as well as the closely related problem of the clustering
coefficient.  Their algorithm has an additive error guarantee as opposed to the
previously mentioned algorithms, which had relative error guarantees.  The
related problem of approximating the triangle count associated with each vertex
has also been studied in the streaming context~\cite{BBCG08,KutzkovP:wsdm2013},
and there are also multi-pass streaming algorithms for approximate triangle
counting~\cite{KolountzakisMPT:waw10}.  However, no non-trivial parallel
algorithms were known so far, other than the naive parallel algorithm.

In the non-streaming (batch) context there are many works on counting and
enumerating triangles---both exact and
approximate~\cite{CT85,Latapy08,SW05,TKMF09,BFNPW11,CC11}. Recent works on
parallel algorithms in the MapReduce model
include~\cite{SV11,Cohen:cse09,PaghT:ipl12}.



\section{Preliminaries and Notation}
\label{sec:prelim}
We consider a simple, undirected graph $G = (V, E)$ with vertex set
$V$ and edge set $E$. The edges of $G$ arrive as a stream, and we
assume that every edge arrives exactly once. Let $m$ denote the number
of edges and $\Delta$ the maximum degree of a vertex in $G$. 
An edge $e \in E$ is a size-$2$ set consisting of its endpoints. 
For edges $e, f \in E$, we say that $e$ is \defn{adjacent to} $f$
or $e$ is \defn{incident on} $f$ if they share a vertex---i.e., 
$|e \cap f| = 1$.
When the graph $G$ has a total order (e.g., imposed by the stream arrival
order), we denote by $\stream = (V, E, \leq_\stream)$ the graph $G = (V,E)$,
together with a total order $\leq_\stream$ on $E$.  The total order fully
defines the standard relations (e.g., $<_\stream, >_\stream, \geq_\stream$),
which we use without explicitly defining.  When the context is clear, we
sometimes drop the subscript.  Further, for a sequence $A = \langle a_1, \dots,
a_{|A|} \rangle$, we write $G_A = (V_A, E_A, \leq_A)$, where $V_A$ is the
relevant vertex set, $E_A = \{a_1, \dots, a_{|A|}\}$, and $\leq_A$ is the total
order defined by the sequence order.
Given $\stream = (V, E, \leq_\stream)$, the \defn{neighborhood of an edge $e \in
  E$}, denoted by $\Nbr_\stream(e)$, is the set of all edges in $E$ incident on
$e$ that ``appear after'' $e$ in the $\leq_\stream$ order; that is,
$\Nbr_\stream(e) := \{ f \in E : f \cap e \neq \emptyset \land f >_\stream e\}$.

Let $\tri(G)$ (or $\tri(\stream)$) denote the set of all triangles in
$G$---i.e., the set of all closed triplets, and $\tau(G)$ be the number of
triangles in $G$.  For a triangle $t^* \in \tri(G)$, define $C(t^*)$ to be
$|\Nbr_\stream(f)|$, where $f$ is the smallest edge of $t^*$
w.r.t. $\leq_\stream$.  Finally, we write $x \in_R S$ to indicate that $x$ is a
random sample from $S$ taken uniformly at random.

\medskip
\noindent
\textbf{Parallel Cost Model.}  We focus on parallel algorithms that efficiently
utilize the memory hierarchy, striving to minimize cache\footnote{The term
  \defn{cache} is used as a generic reference to a level in the memory
  hierarchy; it could be an actual cache level (L1, L2, L3), the TLB, or page
  faults)} misses across the hierarchy.
Towards this goal, we analyze the (theoretical) efficiency of our algorithms in
terms of the memory cost---i.e., the number of cache misses---in addition to the
standard parallel cost analysis.

The specifics of these notions are not necessary to understand this paper
(for completeness, more details appear in Appendix~\ref{sec:wd-pco}). In the
remainder of this section, we summarize the basic ideas of these concepts: The
\defn{work} measure counts the total operations an algorithm performs.
The \defn{depth} measure is the length of the longest chain of dependent tasks
in the algorithm; this is a lower bound on the parallel runtime of the
algorithm. A gold standard here is to keep work around that of the best
sequential algorithm and to achieve small depth (sublinear or polylogarithmic in
the input size).

For memory cost, we adopt the \emph{parallel cache-oblivious (PCO)}
model~\cite{BFGS11}, a well-accepted parallel variant of the
cache-oblivious model~\cite{FLPR99}. A cache-oblivious algorithm
has the advantage of being able to make efficient use of the memory
hierarchy without knowing the specific cache parameters (e.g. cache
size, line size)---in this sense, the algorithm is oblivious to the
cache parameters. This makes the approach versatile, facilitating
code reuse and reducing the number of parameters that need
fine-tuning.
%
In the PCO model, the cache/memory complexity of an algorithm $A$ is given as a
function of cache size $\MM{}$ and line size $\BB{}$ assuming the optimal
offline replacement policy\footnote{In reality, practical heuristics such
least-recently used (LRU) are used and are known to have competitive
performance with the in-hindsight optimal policy.}. 
This \defn{cache complexity measure} is denoted by $\cc(A; \MM{}, \BB{})$, which
behaves much like work measure and subsumes it for the applications in this
paper.
In the context of a parallel machine, it represents the number of cache misses
across all processors for a particular level.  Furthermore, because the
algorithm is oblivious to the cache parameters, the bounds simultaneously hold
across all levels in the memory hierarchy.  
\section{Technical Background}
\label{sec:tech-bk}

\subsection{Neighborhood Sampling}
\label{sec:nbr-sampling}
We review \emph{neighborhood sampling}, a technique for selecting a random
triangle from a streaming graph. This was implicit in the streaming algorithm in
a recent work~\cite{PavanTTW:vldb2013}.  When compared to other sampling
algorithms proposed in the context of streaming graphs, neighborhood sampling
has a higher probability of discovering a triangle, which translates to a better
space bound in the context of triangle counting.
For this paper, we will restate it as a set of invariants:
\begin{invariant}
  \label{invariant:nbsi}
  Let $\stream = (V, E, \leq_\stream)$ denote a simple, undirected graph $G =
  (V,E)$, together with a total order $\leq_\stream$ on $E$.  The tuple $(f_1,
  f_2, f_3, \adjdeg)$, where $f_i \in E \cup \{\emptyset\}$ and $\adjdeg \in
  \Z_+$, satisfies the \defn{neighborhood sampling invariant (NBSI)} if
  \begin{enumerate}[label=(\arabic*),topsep=1pt,itemsep=1pt]
  \item \textbf{Level-1 Edge:} $f_1 \in_R E$ is chosen uniformly at random from $E$;
  \item $\adjdeg = |\Nbr_{\stream}(f_1)|$ is the number of edges in $\stream$
    incident on $f_1$ that appear after $f_1$ according to $\leq_\stream$.
  \item \textbf{Level-2 Edge:} $f_2 \in_R \Gamma_{\stream}(f_1)$ is chosen uniformly from the
    neighbors of $f_1$ that appear after it (or $\emptyset$ if the neighborhood
    is empty); and
  \item \textbf{Closing Edge:} $f_3 >_\stream f_2$ is an edge that completes the triangle
  formed uniquely defined by the edges $f_1$ and $f_2$ (or
    $\emptyset$ if the closing edge is not present).
  \end{enumerate}
\end{invariant}

The invariant provides a way to maintain a random---although
non-uniform---triangle in a graph.  We state a lemma that shows how to obtain an
unbiased estimator for $\tau$ from this invariant (the proof of this lemma
appears in~\cite{PavanTTW:vldb2013} and is reproduced here for completeness):
\begin{lemma}
 \label{lem:unbiased-est}
 Let $\stream = (V, E, \leq_\stream)$ denote a simple, undirected graph $G =
 (V,E)$, together with a total order $\leq_\stream$ on $E$.  Further, let $(f_1,
 f_2, f_3, \adjdeg)$ be a tuple satisfying NBSI. Define random variable $X$ as
 $X = 0$ if $f_3= \emptyset$ and $X=\adjdeg \cdot |E|$ otherwise.  Then,
 $\expct{X} = \tcount(G)$.
\end{lemma}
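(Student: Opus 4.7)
My plan is to prove the lemma by the linearity of expectation, indexing over the triangles of $G$. Since $X$ is nonzero only when the sampled triple $(f_1, f_2, f_3)$ forms a triangle, I would first argue that each triangle $t^* \in \tri(G)$ corresponds to a unique ``discovery event'' in the sampling process, and then show that each such event contributes exactly $1$ to $\expct{X}$, so that summing over all triangles yields $\tcount(G)$.

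First I would fix an arbitrary triangle $t^* \in \tri(G)$ with edges $e_1 <_\stream e_2 <_\stream e_3$. Inspecting NBSI carefully: because the invariant requires $f_2 \in \Nbr_\stream(f_1)$ (so $f_2 >_\stream f_1$) and $f_3 >_\stream f_2$, the \emph{only} way for the sampled triple to coincide with $t^*$ is $f_1 = e_1$, $f_2 = e_2$, $f_3 = e_3$. Let $A_{t^*}$ denote this event. Then I would claim, using the three sampling steps of NBSI:
\begin{align*}
  \Pr[A_{t^*}] \;&=\; \Pr[f_1 = e_1] \cdot \Pr[f_2 = e_2 \mid f_1 = e_1]\\
                &=\; \tfrac{1}{|E|} \cdot \tfrac{1}{|\Nbr_\stream(e_1)|},
\end{align*}
since, once $f_1 = e_1$ is fixed, $e_2$ lies in $\Nbr_\stream(e_1)$ and $f_2$ is drawn uniformly from that neighborhood; and conditioned on $f_1 = e_1, f_2 = e_2$, the closing edge $f_3$ is deterministically $e_3$ (it is the unique edge completing the triangle, and it is present in $E$ and satisfies $e_3 >_\stream e_2$).

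Next I would observe that on the event $A_{t^*}$, the auxiliary quantity $\adjdeg$ equals $|\Nbr_\stream(e_1)|$ by the definition of NBSI, so $X = \adjdeg \cdot |E| = |\Nbr_\stream(e_1)| \cdot |E|$ on this event, and $X = 0$ outside $\bigcup_{t^* \in \tri(G)} A_{t^*}$. Crucially, the events $\{A_{t^*}\}_{t^* \in \tri(G)}$ are pairwise disjoint, since each triple $(f_1, f_2, f_3)$ determines at most one triangle. Therefore
\begin{equation*}
  \expct{X} \;=\; \sum_{t^* \in \tri(G)} \Pr[A_{t^*}] \cdot \bigl(|\Nbr_\stream(e_1(t^*))| \cdot |E|\bigr)
  \;=\; \sum_{t^* \in \tri(G)} \tfrac{1}{|E| \cdot |\Nbr_\stream(e_1)|} \cdot |\Nbr_\stream(e_1)| \cdot |E|
  \;=\; \tcount(G),
\end{equation*}
where $e_1(t^*)$ denotes the smallest edge of $t^*$.

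I do not anticipate a real obstacle here; the argument is essentially a careful bookkeeping exercise. The one subtle point worth stating cleanly is why there is a unique ``discovery ordering'' of each triangle---this follows immediately from the two strict inequalities $f_1 <_\stream f_2 <_\stream f_3$ built into NBSI---and why the cancellation between $\Pr[A_{t^*}]$ and the value of $X$ on $A_{t^*}$ is exact, which reduces to the fact that $\adjdeg$ is defined to be precisely $|\Nbr_\stream(f_1)|$, the same quantity appearing in the denominator when sampling $f_2$ uniformly from $\Nbr_\stream(f_1)$. This is the standard inverse-probability-weighting trick that makes $X$ an unbiased estimator.
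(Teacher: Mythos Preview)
Your proposal is correct and follows essentially the same route as the paper: the paper isolates the same ``discovery probability'' computation as a separate claim (showing $\Pr[\{f_1,f_2,f_3\}=t^*]=1/(|E|\cdot C_\stream(t^*))$ via the two conditional steps you write out) and then notes that the lemma follows directly, exactly by the inverse-probability-weighting cancellation you describe. Your write-up is, if anything, slightly more explicit about the disjointness of the events $A_{t^*}$ and the uniqueness of the discovery ordering than the paper's.
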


The lemma follows directly the following claim, which establishes the
probability that the edges $f_1, f_2, f_3$ coincide with a particular triangle
in the graph $G$.
\begin{claim}
  \label{claim:discovery-prob}
  Let $t^*$ be any triangle in $G$.  If $(f_1, f_2, f_3, \adjdeg)$ satisfies
  NBSI, then the probability that $\{f_1, f_2, f_3\}$ represents the triangle
  $t^*$ is
  \[
  \prob{\{f_1, f_2, f_3\} = t^*} = \frac{1}{|E| \cdot C_\stream(t^*)}
  \]
  where we recall that $C(t^*) = |\Nbr_\stream(f)|$ if $f$ is the $t^*$'s first
  edge in the $\leq_\stream$ ordering.
\end{claim}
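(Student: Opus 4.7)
The plan is to compute the probability by the chain rule, after first pinning down which of the three edges of $t^*$ each of $f_1, f_2, f_3$ must be. The crucial observation is that the invariant forces a strict ordering $f_1 <_\stream f_2 <_\stream f_3$: clause~(3) requires $f_2 \in \Nbr_\stream(f_1)$, hence $f_2 >_\stream f_1$, and clause~(4) requires $f_3 >_\stream f_2$. Let $e_1 <_\stream e_2 <_\stream e_3$ be the edges of $t^*$ in stream order, so $C_\stream(t^*) = |\Nbr_\stream(e_1)|$ by the definition of $C$.

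I would then argue that the only way to have $\{f_1, f_2, f_3\} = t^*$ is to have $f_i = e_i$ for each $i$. The forced ordering already gives $f_1 = e_1$ and $f_3 = e_3$. The remaining candidate mismatch would be $f_1 = e_1$, $f_2 = e_3$, $f_3 = e_2$; this is ruled out because $f_3$ is the closing edge of the triangle determined by $f_1, f_2$, which would be $e_2$, but then $f_3 = e_2 <_\stream e_3 = f_2$ violates clause~(4).

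Having pinned down the assignment, I would finish by the chain rule. Clause~(1) gives $\prob{f_1 = e_1} = 1/|E|$. Conditioned on $f_1 = e_1$, clause~(3) makes $f_2$ uniform over $\Nbr_\stream(e_1)$, which has size $C_\stream(t^*)$ and contains $e_2$ (since $e_2$ shares a vertex with $e_1$ and $e_2 >_\stream e_1$); thus $\prob{f_2 = e_2 \mid f_1 = e_1} = 1/C_\stream(t^*)$. Finally, conditioned on $f_1 = e_1$ and $f_2 = e_2$, the closing edge completing the triangle is uniquely $e_3$, which is present in $E$ and satisfies $e_3 >_\stream e_2$, so $\prob{f_3 = e_3 \mid f_1 = e_1, f_2 = e_2} = 1$. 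Multiplying gives the claimed $1/(|E| \cdot C_\stream(t^*))$.

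The only delicate step is the second paragraph—ruling out the ``swapped'' assignment—because the temptation is to just count uniformly over the three edges. The asymmetry introduced by the stream order through clauses~(3)--(4) is what collapses all such alternatives to the single valid labeling, and this is the place a careless proof would overcount by a factor of two or three. The rest is a mechanical product of three conditional probabilities, and the lemma on $\expct{X} = \tcount(G)$ follows by summing over $t^* \in \tri(G)$.
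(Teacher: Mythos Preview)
Your proof is correct and follows essentially the same route as the paper: label the edges of $t^*$ in stream order, argue that the event $\{f_1,f_2,f_3\}=t^*$ forces $f_i=e_i$, and then compute $\prob{f_1=e_1}\cdot\prob{f_2=e_2\mid f_1=e_1}$, with the third conditional factor equal to $1$. One small expositional point: once you have established $f_1<_\stream f_2<_\stream f_3$ in your first paragraph, the equality of ordered triples $f_i=e_i$ is immediate, so the separate ``remaining candidate mismatch'' case in your second paragraph is already covered and can be dropped.
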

\begin{proof}
  Let $t^* = \{e_1, e_2, e_3\} \in \tri(G)$, where $e_1 < e_2 < e_3$ without
  loss of generality. Further, let $\event_1$ be the event that $f_1 = e_1$, and
  $\event_2$ be the event that $f_2 = e_2$.  It is easy to check that $f_3 =
  e_3$ if and only if both $\event_1$ and $\event_2$ hold.  By NBSI, we have
  that $\prob{\event_1} = \frac{1}{|E|}$ and $\prob{\event_2 \/\mid\/ \event_1}
  = \frac{1}{|\Nbr_{\stream}(e_1)|} = \frac{1}{C_\stream(t^*)}$.  Thus, $\prob{t
    = t^*} = \prob{\event_1 \cap \event_2} = \frac{1}{|E|\cdot C_\stream(t^*)}$,
  concluding the proof.
\end{proof}



While the estimate provided by Lemma~\ref{lem:unbiased-est} is unbiased, it has
a high variance. We can obtain a sharper estimate by running multiple copies of
the estimate (making independent random choices) and aggregating them, for
example, using median-of-means aggregate.  The proof of the following theorem is
standard; interested readers are referred to~\cite{PavanTTW:vldb2013}:
%
\begin{theorem}[\cite{PavanTTW:vldb2013}]
  \label{thm:mdelta-suffices}
  There is an $(\vareps, \delta)$-approximation to the triangle counting problem
  that requires at most $r$ independent estimators on input a graph $G$ with $m$
  edges, provided that $r \geq \frac{96}{\vareps^2} \cdot
  \frac{m\Delta(G)}{\tau(G)} \cdot \log (\tfrac1\delta)$.
\end{theorem}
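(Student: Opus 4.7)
The plan is to prove this by the standard two-stage (Chebyshev $+$ median-of-means) amplification of the single-copy estimator from Lemma~\ref{lem:unbiased-est}. Let $X$ be the random variable of that lemma, so $\expct{X} = \tau(G)$. First I would compute the second moment. Using Claim~\ref{claim:discovery-prob} and the fact that $X = \adjdeg \cdot |E|$ on the event $\{f_1,f_2,f_3\}=t^*$ with $\adjdeg = C_\stream(t^*)$,
\[
\expct{X^2} \;=\; \sum_{t^* \in \tri(G)} \frac{1}{|E|\cdot C_\stream(t^*)} \cdot \bigl(C_\stream(t^*) \cdot |E|\bigr)^2 \;=\; |E| \sum_{t^* \in \tri(G)} C_\stream(t^*).
\]
Since $C_\stream(t^*) = |\Nbr_\stream(f)|$ for the first edge $f$ of $t^*$, and any edge has at most $2(\Delta-1) \le 2\Delta$ incident edges, I get $\expct{X^2} \le 2m\,\Delta(G)\,\tau(G)$, and hence $\var{X} \le \expct{X^2} \le 2m\Delta\tau$.

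Next I would average $k$ i.i.d.\ copies of $X$ to form $\bar{X}_k$, whose variance is $\var{X}/k$. Applying Chebyshev's inequality,
\[
\prob{\,|\bar{X}_k - \tau| \ge \vareps\tau\,} \;\le\; \frac{\var{X}}{k\,\vareps^2\tau^2} \;\le\; \frac{2m\Delta}{k\,\vareps^2\tau}.
\]
Choosing $k \ge 16\,m\Delta/(\vareps^2\tau)$ makes this at most $1/8$, so each group of $k$ estimators yields an $(\vareps,1/8)$-approximation.

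Finally I would boost confidence via the median-of-means trick: take the median of $t$ independent such group-averages. A standard Chernoff argument on the indicator of ``group $j$ is $\vareps$-accurate'' (each with success $\ge 7/8$) shows that the median fails to be $\vareps$-accurate with probability at most $\exp(-t/c)$ for an absolute constant $c$; in particular $t \ge 6\ln(1/\delta)$ suffices. The total number of estimators needed is $r = k\cdot t \ge 96\,(m\Delta/\tau)\log(1/\delta)/\vareps^2$, matching the claimed bound.

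The only nontrivial step is the second-moment computation; everything else is boilerplate. The main potential pitfall is the bound on $C_\stream(t^*)$: one must be careful that the relevant neighborhood is taken in the whole stream (not just the edges after $t^*$), and observe that any edge has at most $2(\Delta-1)$ incident edges regardless of the ordering, so the $2\Delta$ factor is robust to the choice of $\leq_\stream$. Once this is in hand, plugging constants into the Chebyshev--median pipeline gives the stated $\frac{96}{\vareps^2}\cdot\frac{m\Delta}{\tau}\cdot\log\frac{1}{\delta}$ cleanly.
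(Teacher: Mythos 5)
Your proposal is correct and follows exactly the route the paper intends: the paper omits the proof, noting it is the ``standard'' median-of-means amplification and deferring to~\cite{PavanTTW:vldb2013}, and your second-moment bound $\expct{X^2} = |E|\sum_{t^*} C_\stream(t^*) \leq 2m\Delta\tau$ together with the Chebyshev-then-Chernoff pipeline (with $k \geq 16m\Delta/(\vareps^2\tau)$ and $t \geq 6\ln(1/\delta)$) is precisely that argument, landing on the stated constant $96$. No gaps.
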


\subsection{Parallel Primitives}
Instead of designing a parallel cache-oblivious algorithm directly, we describe
our algorithm in terms of primitive operations that have parallel algorithms
with optimal cache complexity and polylogarithmic depth.  This not only
simplifies the exposition but also allows us to utilize existing implementations
that have been optimized over time.

Our algorithm relies on the following primitives: \procName{sort},
\procName{merge}, \procName{concat}, \procName{map}, \procName{scan},
\procName{extract}, and \procName{combine}.  
The primitive \procName{sort} takes a sequence and a comparison function, and
outputs a sorted sequence. The primitive \procName{merge} combines two sorted
sequences into a new sorted sequence.  The primitive \procName{concat} forms a
new sequence by concatenating the input sequences.
The primitive \procName{map} takes a sequence $A$ and a function $f$,
and it applies $f$ on each entry of $A$.  The primitive \procName{scan}
(aka. prefix sum or parallel prefix) takes a sequence $A$ ($A_i \in D$), an
associative binary operator $\oplus$ ($\oplus\!\!: D\times D \to D$), a
left-identity $\textsf{id}$ for $\oplus$ ($\textsf{id} \in D$), and it produces
the sequence $\langle \textsf{id}, \textsf{id} \oplus A_1, \textsf{id} \oplus
A_1 \oplus A_2, \dots, \textsf{id} \oplus A_1 \oplus \dots \oplus A_{|A|-1}
\rangle$.  The primitive \procName{extract} takes two sequences $A$ and $B$,
where $B_i \in \{1, \dots, |A|\} \cup \{\procName{null}\}$, and returns a
sequence $C$ of length $|B|$, where $C_i = A[B_i]$ or \procName{null} if $B_i =
\procName{null}$.  Finally, the primitive \procName{combine} takes two sequences
$A, B$ of equal length and a function $f$, and outputs a sequence $C$ of length
$|A|$, where $C[i] = f(A[i], B[i])$.

On input of length $N$, the cache complexity of sorting in the PCO
model\footnote{As is standard, we make a technical assumption known as the
  tall-cache assumption (i.e., $\MM \geq \Omega(\BB^2)$).} is $\cc(\sortAlg(N);
\MM,\BB) = O(\frac{N}{B} \log_{M/B} (1+\frac{N}{B}))$, whereas
\procName{concat}, \procName{map}, \procName{scan}, and \procName{combine} all
have the same cost as $\procName{scan}$: $\cc(\scanAlg(N); \MM, \BB) =
O(N/\BB)$. For \procName{merge} and \procName{concat}, $N$ is the length of the
two sequences combined. We also write $\sortAlg(N)$ and $\scanAlg(N)$ to denote
the corresponding cache costs when the context is clear.  In this notation, the
primitive \procName{extract} has $O(\sortAlg(|B|) + \scanAlg(|A| + |B|))$.  All
these primitives have at most $O(\log^2 N)$ depth.


In addition, we will rely on a primitive for looking up multiple keys from a
sequence of key-value pairs.  Specifically, let $S = \langle (k_1, v_1), \dots,
(k_n, v_n) \rangle$ be a sequence of $n$ key-value pairs, where $k_i$ belongs to
a total order domain of keys. Also, let $T = \langle k'_1, \dots, k'_m \rangle$
be a sequence of $m$ keys from the same domain.  The \emph{exact multisearch
  problem} (\procName{exactMultiSearch}) is to find for each $k'_j \in T$ the
matching $(k_i, v_i) \in S$. We will also use the \emph{predecessor multisearch}
(\procName{predEQMultiSearch}) variant, which asks for the pair with the largest
key no larger than the given key.  The exact search version has a simple hash
table implementation that will not be cache friendly.  Existing cache-optimal
\sortAlg{} and \procName{merge} routines directly imply an implementation with
$O(\sortAlg(n) + \sortAlg(m) + \scanAlg(n+m))$ cost:

\begin{lemma}[\cite{BGS10,BFGS11}]
  \label{lem:ms-cost}
  There are algorithms $\procName{exactMultiSearch}(S, T)$ and
  $\procName{predEQMultiSearch}(S, T)$ each running in $O(\log^2 (n+m))$ depth
  and $O(\sortAlg(n) + \sortAlg(m))$ cache complexity, where $n = |S|$ and $m =
  |T|$.  Furthermore, if $S$ and $T$ have been presorted in the key order, these
  algorithms take $O(\log (n+m))$ depth and $O(\scanAlg(n+m))$.
\end{lemma}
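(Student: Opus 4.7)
The plan is to reduce both multisearch problems to a short pipeline of the parallel primitives already introduced, and then invoke their known depth and cache-complexity bounds. First I would tag each entry of $T$ with its original index $1 \le j \le m$, then invoke \sortAlg{} on $S$ by key and on $T$ by key independently. Next, I would \procName{merge} the two sorted sequences into a single key-sorted sequence of length $n+m$, breaking ties so that each $S$-entry precedes every $T$-entry with the same key (this tie-breaking is what makes the search return the predecessor-or-equal). Then apply \scanAlg{} with an operator that propagates the most recently seen $(k_i, v_i)$ from $S$, treating $T$-entries as identity elements; after the scan each $T$-entry in the merged sequence is annotated with the largest-key $S$-pair whose key does not exceed its query. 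For \procName{predEQMultiSearch} this is already the answer; for \procName{exactMultiSearch}, an additional \procName{map} replaces the annotation with \procName{null} whenever the annotated key differs from the query key. A final \sortAlg{} (or tagged \scanAlg{} plus \procName{extract}) by the attached original index restores the output order of $T$.

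The cost analysis then follows by summing the per-primitive bounds stated just before the lemma. Each of \sortAlg, \procName{merge}, \scanAlg, and \procName{map} has $O(\log^2(n+m))$ depth, and their composition is a constant-length pipeline, giving total depth $O(\log^2(n+m))$. For the cache complexity, $\sortAlg(n)+\sortAlg(m)$ dominates the \procName{merge}, \scanAlg, and \procName{map} costs under the tall-cache assumption, since $\scanAlg(n+m) = O((n+m)/\BB)$ while $\sortAlg(n) = \Omega(n/\BB)$, yielding $O(\sortAlg(n) + \sortAlg(m))$ overall.

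In the presorted case, the two initial \sortAlg{} invocations and the final reordering step all disappear, because the caller has supplied $S$ and $T$ already in key order and is content to receive answers in that same order. What remains is the \procName{merge}, the \scanAlg, and (for the exact variant) the \procName{map}, each of which has $O(\log(n+m))$ depth and $O(\scanAlg(n+m))$ cache cost, giving the sharper bounds claimed. The one mildly delicate point, and the part I would be most careful about, is the design of the binary operator used inside \scanAlg: it must be associative, handle the ``identity'' $T$-entries correctly, and in combination with the \procName{merge} tie-breaking rule must implement predecessor-or-equal semantics rather than strict predecessor. Everything else is a direct application of the primitive costs quoted immediately above.
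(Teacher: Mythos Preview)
Your proposal is correct and follows essentially the same approach that the paper sketches: the paper does not give a full proof but simply notes that ``existing cache-optimal \sortAlg{} and \procName{merge} routines directly imply an implementation'' and cites \cite{BGS10,BFGS11}, which is exactly the sort-then-merge-then-scan pipeline you spell out. Your write-up supplies the details (tagging $T$ with original indices, the tie-breaking rule in \procName{merge}, the propagating \scanAlg{} operator, and the final \procName{map} for the exact variant) that the paper leaves implicit; the cost accounting matches, and your observation that dropping the sorts in the presorted case collapses the depth to $O(\log(n+m))$ and the cache cost to $O(\scanAlg(n+m))$ is precisely the refinement stated in the second half of the lemma.
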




\section{Parallel Streaming Algorithm}
\label{sec:par-tri-count}

In this section, we describe the coordinated parallel bulk-processing algorithm.
The main result of this section is as follows:

\begin{theorem}
  \label{thm:pco-block-cost}
  Let $r$ be the number of estimators maintained for approximate
  triangle counting. There is a parallel algorithm
  \procName{bulkUpdateAll} for processing a batch of $s$ edges with
  $O(\sortAlg(r) + \sortAlg(s))$ cache complexity (memory-access cost)
  and $O(\log^2 (r+s))$ depth.
\end{theorem}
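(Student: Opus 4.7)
The strategy is to express \procName{bulkUpdateAll} as a constant number of calls to the parallel primitives (\sortAlg, \procName{merge}, \procName{map}, \scanAlg, \procName{extract}, \procName{combine}, \procName{exactMultiSearch}, \procName{predEQMultiSearch}) on inputs of total size $O(r+s)$, and then invoke their stated costs.  Since each primitive contributes at most $O(\sortAlg(r)+\sortAlg(s))$ to the cache complexity and $O(\log^2(r+s))$ to the depth, a constant-depth composition will yield the theorem.  The bulk of the work will be in laying out how to decompose a single NBSI update into data-parallel stages that are shared across all $r$ estimators, rather than performed once per estimator; this coordination is what beats the independent-bulk scheme.

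\paragraph*{Decomposition of one batch update.}
For a batch $B=\langle b_1,\dots,b_s\rangle$ arriving on top of a stream of $m$ edges, I will treat the four NBSI components separately and show each admits a primitive-only realization.
First, for the \emph{Level-1} edge $f_1$, I would perform a coordinated reservoir-sampling decision: use \procName{map} over the $r$ estimators to draw, for each, a geometric trial determining whether its $f_1$ gets replaced by some $b_i$ and which index $i\in\{1,\dots,s\}$ is chosen, then use \procName{extract} to pull the chosen edges out of $B$.  Second, to update $\adjdeg$ for estimators whose $f_1$ is retained, sort $B$ by endpoint (via \sortAlg), sort the $r$ current $f_1$ edges by endpoint, and use \procName{predEQMultiSearch} plus \scanAlg{} to count, for each $f_1$, how many edges of $B$ are incident on it; add this count to $\adjdeg$ via \procName{combine}.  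For an estimator whose $f_1$ has been replaced by $b_i$, run the analogous count restricted to positions $>i$ within $B$ (achievable with a single prefix-indexed multisearch since $B$ is already sorted).  Third, update the \emph{Level-2} edge $f_2$: whenever the count of new incident edges is nonzero, a coordinated reservoir step decides whether to swap $f_2$ and, if so, picks the replacement by index within the list of incidents---again realized by \procName{map}, \procName{extract}, and a scan over the sorted incident structure.  Fourth, update the \emph{closing} edge $f_3$: from each (possibly updated) $(f_1,f_2)$ pair, form the unique third edge that would close the triangle, then do \procName{exactMultiSearch} against the set of edges seen so far augmented by $B$---only the $B$ side needs to be searched in the batch, since $f_3$'s presence in the prior stream was already resolved in earlier batches.

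\paragraph*{Cost accounting and main obstacle.}
Each stage above consists of a constant number of sorts on inputs of size $r$ or $s$, together with scans and multisearches on inputs of total size $O(r+s)$; by Lemma~\ref{lem:ms-cost} and the quoted primitive costs, every stage contributes $O(\sortAlg(r)+\sortAlg(s))$ cache complexity and $O(\log^2(r+s))$ depth, and summing over a constant number of stages preserves these bounds.  The main obstacle, and the step I would spend the most care on, is the $f_1$-replacement step: when several estimators adopt different edges of $B$ as their new $f_1$, the subsequent $\adjdeg$ and $f_2$ updates must be computed with respect to the \emph{position-restricted} neighborhood $\Nbr(\cdot)$ within $B$.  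I would handle this by first sorting $B$ by endpoint with the original arrival index retained as a secondary key, then using a single pass of \procName{predEQMultiSearch} that treats the chosen index $i$ as a lower bound in a key-index composite order; this keeps the restricted-neighborhood computation within the same primitive budget rather than degenerating into $r$ independent scans.  Once this is set up cleanly, the theorem follows by composing the primitive bounds.
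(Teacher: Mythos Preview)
Your proposal is correct and follows essentially the same three-stage decomposition as the paper (Level-1 via reservoir sampling with \procName{map}/\procName{extract}, Level-2 via sorted-by-endpoint structure plus multisearch, closing edge via \procName{exactMultiSearch}), with the same cost accounting. The one place where the paper is more explicit is your ``main obstacle'': rather than treating the position-restricted neighborhood via a composite (endpoint, arrival-index) key in \procName{predEQMultiSearch}, the paper introduces a named auxiliary quantity $\rank(u\to v)$ and a dedicated routine \procName{rankAll}$(W)$ that, after one sort and one segmented scan over a doubled arc list, outputs for every oriented edge both its rank and original position; this simultaneously yields $\adjdeg^{+}=\rank(u\to v)+\rank(v\to u)$ and a clean indexing scheme for sampling $f_2$ (the $j$-th candidate is the arc with $\procName{src}=u$, $\procName{rank}=j$, or the analogous arc on $v$). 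Your composite-key multisearch is the same device without the name. One minor slip: the Level-1 replacement decision is a single Bernoulli (replace with probability $s/(m+s)$, then uniform over $W$), not a geometric trial.
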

This translates to a $p$-processor time of $T_p = \Theta((r\log r + s\log s)/p +
\log^2 (r+s))$ in a PRAM-type model.  To meet these bounds, we cannot afford to
explicitly track each estimator individually.  Neither can we afford a large
number of random accesses.

We present an overview of the algorithm and proceed to discuss its details and
complexity analysis.

\newcommand{\est}{\text{\normalfont \textit{est}}}
\newcommand{\dsto}{\text{\normalfont \textit{ds}}}

\subsection{Algorithm Overview}
We outline a parallel algorithm for efficiently maintaining $r$ independent
estimators satisfying NBSI as a bulk of edges arrives.  More precisely, for
$i= 1,\dots, r$, let $\est_i$ be a tuple $(f_1, f_2, f_3, \adjdeg)$ satisfying NBSI
on the graph $G = (V, E, \leq_{\stream})$, where $\leq_\stream$ gives a total
order on $E$. The sequence of arriving edges is modeled as $W = \langle w_1,
\dots, w_{s} \rangle$; the sequence order defines a total order on $W$.  Denote
by $G' = (V', E', \leq_{\stream'})$ the graph on $E \cup W$, where the edges of
$W$ all come after the edges of $E$ in the new total order $\stream'$.

The goal of the \procName{bulkUpdateAll} algorithm is to take as input
estimators $\est_i$ ($i = 1, \dots, r$) that satisfy NBSI on $G$ and the
arriving edges $W$, and to produce as output estimators $\est'_i$ ($i = 1,\dots,
r$) that satisfy NBSI on $G'$.  The algorithm has $3$ steps corresponding to the
main parts of NBSI. After these steps, the NBSI invariant is upheld; if the
application so chooses, it can aggregate the ``coarse'' estimates together in
cost no more expensive that the update process itself.
\begin{enumerate}[topsep=1.5pt,itemsep=2pt,label=\textbf{Step~\arabic*:},leftmargin=3\parindent]
\item Update level-1 edges;
\item Update level-2 edges and neighborhood sizes $\adjdeg$'s;
\item Check for closing edges.
\end{enumerate}
For each step, we will describe what needs to done conceptually for each
individual estimator and explain how to do it efficiently across all $r$
estimators in parallel. Slightly abusing notation to reduce clutter, we will
write $\Nbr_A(f)$ to mean $A \cap \Nbr_{\stream'}(f)$, where $A \subseteq
\stream'$ is a set and $f \in \stream'$ is an edge.





\subsection{Step 1: Manage Level-1 Edges}
The goal of Step 1 is to make sure that for each estimator, its level-1 edge
$f_1$ is a uniform sample of the edges that have arrived so far ($E \cup W$).
We begin by describing a conceptual algorithm for \emph{one} estimator.  This
step is relatively straightforward: use a simple variant of reservoir
sampling---with probability $\frac{|W|}{|W|+|E|}$, replace $f_1$ with an edge
uniformly chosen from $W$; otherwise, retain the current edge.  Notice that for
a batch size of $1$ (i.e., $|W| = 1$), this recovers the familiar reservoir
sampling algorithm.

Implementing this step in parallel is easy. First, we generate a length-$r$
vector $\procName{idx}$, where $\procName{idx}[i]$ is the index of the
replacement edge in $W$ or $\procName{null}$ if the current edge is retained.
This can be done by using the primitive \procName{map} on the function
\procName{randInt}$(|W| + |E|)$.  Then, we apply \procName{extract} and
\procName{combine} to ``extract'' these indices from $W$ and update the level-1
edges in the estimator states.  For the estimators receiving a new level-1 edge,
set its $\adjdeg$ to $0$ (this helps simplify the presentation in the next
step).

\subsection{Step 2: Update Level-2 Edges and Degrees}
The goal of Step 2 is to ensure that for every estimator $\est_i$, the level-2
edge $\est_i.f_2$ is chosen uniformly at random from
$\Nbr_{\stream'}(\est_i.f_1)$ and $\est_i.\adjdeg =
|\Nbr_{\stream'}(\est_i.f_1)|$. We remember that $\Nbr_{\stream'}(f_1) = \{ e
>_{\stream'} f_1 : e \cap f_1 \neq \emptyset \}$---or, in words, the set of
edges incident on $f_1$ that appear after it in $\stream'$. In this view,
$\adjdeg$ is the size of $\Nbr_{\stream'}(f_1)$, and $f_2$ is a random sample
from an appropriate ``substream.''

We now describe a conceptual algorithm for \emph{one} estimator. Consider an
estimator $\est_i = (f_1, f_2, f_3, \adjdeg)$ that has completed Step 1. To
streamline the presentation, we define two quantities:
\begin{equation*}
  \adjdeg^{-} = |\Nbr_{E}(f_1)|  \quad\text{ and }\quad  \adjdeg^{+} = |\Nbr_{W}(f_1)|
\end{equation*}
In words, $\adjdeg^{-}$ is the number edges in $E$ incident on $f_1$ that
arrived after $f_1$---and $\adjdeg^{+}$ is the number edges in $W$ incident on
$f_1$ that arrived after $f_1$.  Thus, $\adjdeg^{-} = \adjdeg$ (inheriting it
from the current state); remember that if $f_1$ was just replaced in Step 1,
$\adjdeg$ was reset to $0$ in Step 1.  We also note that
$|\Nbr_{\stream'}(f_1)|$---the total number of edges incident on $f_1$ that
arrived after it in the whole stream---is $\adjdeg^{-} + \adjdeg^{+}$.

In this notation, the update rule is simply:
\begin{quote}
  With probability $\frac{\adjdeg^{-}}{\adjdeg^{-} + \adjdeg^{+}}$, keep the
  current $f_2$; otherwise, pick a random edge from $\Nbr_W(f_1)$. Then, update
  $\adjdeg$ to $\adjdeg^{-} + \adjdeg^{+}$.
\end{quote}

\medskip
\noindent\textbf{Designing an efficient algorithm.} Before we can implement this
step efficiently, we have to answer the question:
\emph{How to efficiently compute, for every estimator, the number of candidate
edges $\adjdeg^{+}$ and how to sample uniformly from these candidates?}  This
is challenging because at a high level, all $r$ estimators need to navigate
their substreams---potentially all different---to figure out their sizes and
sample from them. To meet our performance requirement, we cannot afford to
explicitly track these substreams.  Moreover, for reasons of cache efficiency,
we can neither afford a large number of random accesses, though this
seems necessary at first glance.

We address this challenge in two steps: First, we define the notion of
\emph{rank} and present a fast algorithm for it. 
Second, we show how to sample efficiently using the {\em rank},
and how it relates to the number of potential candidates.

\begin{definition}[Rank]
  Let $W = \langle w_1, \dots, w_{s}\rangle$ be a sequence of unique edges. Let
  $H = (V_W, W)$ be the graph on the edges $W$, where $V_W$ is the set of
  relevant vertices on $E \cup W$.  For $x,y \in V_W$, $x\neq y$, the
  \emph{rank} of $x \to y$ is
  \[
  \rank(x \to y) = \begin{cases}
    |\{j :  x \in w_j \land j > i \}| & \text{ if }\exists i, \{x, y\} = w_i\\
    \deg_{H}(x) & \text{ otherwise }
  \end{cases}
  \]
\end{definition}

\begin{figure}[tb]
  \centering
  \begin{minipage}{1.85in}
    \includegraphics[width=1.85in]{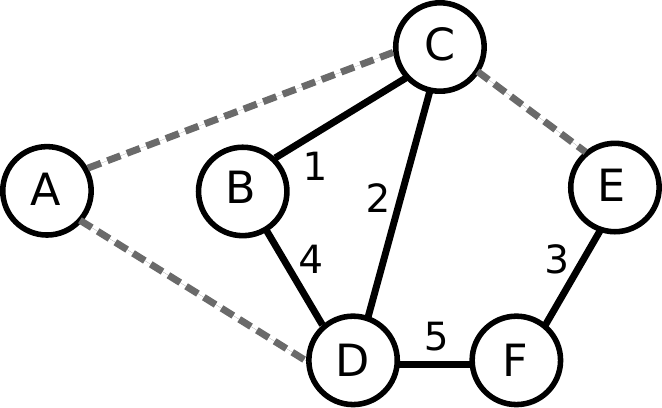}
  \end{minipage}
  \;\,\;
  \begin{minipage}{1in}
    {
    \begin{tabular}{l c r}
      \toprule
      \small
      Arc & \small Rank\\
      \midrule
      \small {\sf C} $\to$ {\sf A} & \small 2\\
      \small \textsf{C} $\to$ \small \textsf{D} & \small 0\\
      \small \textsf{D} $\to$ \small \textsf{C} & \small 2\\
      \small \textsf{D} $\to$ \small \textsf{F} & \small 0\\
      \small \textsf{B} $\to$ \small \textsf{C} & \small 1\\
      \small \textsf{B} $\to$ \small \textsf{D} & \small 0\\
      \bottomrule
    \end{tabular}
  }
  \end{minipage}
  \smallskip
  \caption{A $6$-node streaming graph where a batch of $5$ new edges (solid;
    arrival order labeled next to them) is being added to a stream of $3$ edges
    (dashed) that have arrived earlier---and examples of the corresponding
    \textrm{rank} values as this batch is being incoporated.}
  \label{fig:adding-example-1}
\end{figure}


In words, if $\{x, y\}$ is an edge in $W$, $\rank(x \to y)$ is the number of
edges in $W$ that are incident on $x$ and appear after $xy$ in $W$. For non-edge
pairs, $\rank(x \to y)$ is simply the degree of $x$ in the graph $G_W$.  This
function is, in general, not symmetric: $\rank(x \to y)$ is not the same as
$\rank(y \to x)$.  We provide an example in Figure~\ref{fig:adding-example-1}.


%
\medskip
\noindent\textbf{Computing rank.}  We give an algorithm for computing the rank
of every edge, in both orientations; it outputs a sequence of length $2|W|$ in
an order convenient for subsequent lookups.  Following the description and
proof, we show a run of the algorithm on an example graph.
\begin{lemma}
  \label{lem:rank-all}
  There is a parallel algorithm $\procName{rankAll}(W)$ that takes a sequence of
  edges $W = \langle w_1, \dots, w_{s} \rangle$ and produces a sequence of
  length $2|W|$, where each entry is a record $\{ \procName{src},
  \procName{dst}, \procName{rank}, \procName{pos} \}$ such that
  \begin{enumerate}[topsep=0pt]
  \item $\{\procName{src}, \procName{dst}\} = w_i$ for some $i = 1, \dots, |W|$;
  \item $\procName{pos} = i$; and
  \item $\procName{rank} = \rank(\procName{src} \to
    \procName{dst})$.
  \end{enumerate}
  Each input edge $w_i$ gives rise to exactly $2$ entries, one per orientation.
  The algorithm runs in $O(\sortAlg(s))$ cache complexity and $O(\log^2 s)$
  depth.
\end{lemma}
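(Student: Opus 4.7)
The plan is to reduce the computation to a single sort followed by a segmented scan. First, I would use \procName{map} to expand $W$ into a sequence of $2s$ records: each edge $w_i = \{x,y\}$ contributes two oriented records $(\procName{src}=x, \procName{dst}=y, \procName{pos}=i)$ and $(\procName{src}=y, \procName{dst}=x, \procName{pos}=i)$. This step costs $O(\scanAlg(s))$ with $O(\log s)$ depth.

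Next I would \procName{sort} the $2s$ records by the key $(\procName{src}\text{ ascending},\ \procName{pos}\text{ descending})$. After sorting, records sharing a source vertex form a contiguous segment in which $\procName{pos}$ values appear in strictly decreasing order. The key observation is that for the record at $0$-indexed local position $k$ inside its segment, there are exactly $k$ edges of $W$ incident on its $\procName{src}$ with strictly larger $\procName{pos}$; by the definition of rank (and noting that the records being produced are exactly the ``edge'' case of the definition), this count equals $\rank(\procName{src} \to \procName{dst})$. The \procName{sort} costs $O(\sortAlg(s))$ with $O(\log^2 s)$ depth.

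It remains to fill in the local index $k$ for every record, which is a standard segmented-scan computation assembled from the primitives of Section~\ref{sec:tech-bk}. Concretely, I would tag each record with a ``new segment'' flag via \procName{combine} against a one-position shift (marking places where $\procName{src}$ changes from its predecessor's), then run one \procName{scan} with $+$ to obtain a running count of records, and finally use a second \procName{scan}/\procName{combine} to subtract the running count at the start of each segment, leaving exactly $k$ in the $\procName{rank}$ field. Each sub-step costs $O(\scanAlg(s))$ with $O(\log s)$ depth.

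Adding the costs, the \procName{sort} dominates, giving $O(\sortAlg(s))$ cache complexity and $O(\log^2 s)$ depth as claimed; as a bonus, the output is already grouped by $\procName{src}$, which is the layout that Step 2 needs for its subsequent multi-lookups. The only real conceptual step is choosing the sort key so that the segmented local index directly equals the desired rank; once that observation is in place, the rest is a routine composition of \procName{map}, \procName{sort}, and \procName{scan}, and the claimed bounds follow immediately from the per-primitive costs quoted earlier.
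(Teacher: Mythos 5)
Your proposal is correct and follows essentially the same route as the paper: expand each edge into two oriented records, sort by $(\procName{src}$ ascending, $\procName{pos}$ descending$)$, and observe that the $0$-indexed position within each \procName{src}-segment equals $\rank(\procName{src} \to \procName{dst})$, computed via a segmented scan. Your particular assembly of the segmented scan (global running count minus segment-start offset) differs only in implementation detail from the paper's scan-with-resets, and the cost accounting matches.
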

(Readers interested in the main idea of the algorithm may wish to skip this
proof on the first read and proceed to the example in
Figure~\ref{fig:rank-example}.)
\begin{proof}
  We describe an algorithm and reason about its complexity.  First, we form a
  sequence $F$, where each $w_i = \{u, v\}$ yields two records:
  $\{\procName{src} = u, \procName{dst} = v, \procName{pos} = i\}$ and
  $\{\procName{src} = v, \procName{dst} = u, \procName{pos} = i\}$.  That is,
  there are two entries corresponding to each input edge, both marked with the
  position in the original sequence. This step can be accomplished in
  $O(\scanAlg(s))$ cache complexity and at most $O(\log^2 s)$ depth using
  \procName{map} and \procName{concat}.

  If we view the entries of $F$ as directed edges, the rank of an arc $e \in F$,
  $\rank(e.\procName{src} \to e.\procName{dst})$, is the number of arcs in $F$
  emanating from $e.\procName{src}$ with $\procName{pos} > e.\procName{pos}$. To
  take advantage of this equivalent definition, we sort $F$ by \procName{src}
  and for two entries of the same \procName{src}, order them in the decreasing
  order of \procName{pos}.  This has $O(\sortAlg(s))$ cost and $O(\log^2 s)$
  depth.  Now, in this ordering, the rank of a particular edge is one more than
  the rank of the edge immediately before it unless it is the first edge of that
  \procName{src}; the latter has rank $0$. Consequently, the rank computation
  for the entire sequence has cost $O(\scanAlg(s))$ since all that is required
  is to figure out whether an edge is the first edge with that \procName{src}
  (using \procName{combine}) and a prefix scan operation (see, e.g.,
  \cite{Jaja:book92} or Appendix~\ref{sec:scan-with-resets}).
  Overall, the algorithm runs in $O(\sortAlg(s) + \scanAlg(s))$ cache complexity
  and $O(\log^2 s)$ depth, as claimed.
\end{proof}
\begin{figure}[t]
  \procName{rankAll}$(W) = \;$\\
  \begin{center}
    \vspace{-0.4cm}
  {\sffamily
  \begin{tabular}{r r c c c c c c c c c c}
    \toprule
    \multirow{3}{*}{\textbf{\fbox{I}}}
    &\footnotesize src: &
    B & B & C &C & D &D &D &E & F &F\\
    &\footnotesize \textsf{dst:} &
    D & C & D & B & F & B & C & F & D & E\\
    &\footnotesize \textsf{pos:} &
    4 & 1 & 2 & 1 & 5 & 4 & 2 & 3 & 5 & 3 \\
    \midrule
    \textbf{\fbox{II}} &
    \footnotesize \textsf{rank:} &
    0 & 1 & 0 & 1 & 0 & 1 & 2 & 0 & 0 & 1\\
    \bottomrule
  \end{tabular}
}
\end{center}
\caption{Running \procName{rankAll} on the example streaming graph. Each column
  corresponds to an entry in the output.}
\label{fig:rank-example}
\end{figure}

In Figure~\ref{fig:rank-example}, we give an example of the output of
\procName{rankAll} on the example graph from Figure~\ref{fig:adding-example-1}.
It also illustrates the intermediate steps of the algorithm: the 3 rows marked
with $\textbf{\textsf{I}}$ show the array $F$ the algorithm has after the sort
step. The line marked with $\textbf{\textsf{II}}$ is the result after performing
\procName{scan}.

In addition, using this figure as an example, we make two observations about the
output of \procName{rankAll} that will prove useful later on.
\begin{enumerate}[topsep=1.5pt, itemsep=0pt]
\item it is ordered by \procName{src} breaking tie in the decreasing
  order of \procName{pos}; and
\item it is ordered by $\procName{src}$ then by
  $\procName{rank}$, in the increasing order.
\end{enumerate}

\medskip
\noindent\textbf{Mapping rank to substreams.}  Having defined rank and showed how
to compute it, we now present an easy-to-verify observation that
\emph{implicitly} defines a substream with respect to a level-1 edge $f_1$ in
terms of rank values:
%
\begin{observation}
  Let $f_1 = \{u, v\} \in E \cup W$ be a level-1 edge. Let $F' =
  \procName{rankAll}(W)$. The set of edges in $W$ incident on $f_1$ that appears
  after it---i.e., the set $\Nbr_W(f_1)$---is precisely the undirected edges
  corresponding to $L \cup R$, where
  \begin{align*}
    L &= \{ e \in F' : e.\procName{src} = u, e.\procName{rank} < \rank(u \to v) \}\\
    \intertext{and}
    R &= \{ e \in F' : e.\procName{src} = v, e.\procName{rank} < \rank(v \to u) \}.
  \end{align*}
\end{observation}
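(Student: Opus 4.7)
The plan is to unpack the meaning of \procName{rank} via the sort order used inside \procName{rankAll}, and then carry out a short case analysis on whether $f_1$ itself lies in $W$ or in the earlier stream $E$. The core idea is that, for a fixed source vertex $u$, the ranks assigned to entries of $F'$ with $\procName{src}=u$ are precisely $0, 1, 2, \dots$ in order of decreasing $\procName{pos}$, so a smaller rank corresponds to a later arrival among the $W$-edges incident on $u$.

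First I would make explicit the following fact, implicit in the proof of Lemma~\ref{lem:rank-all}: after the sort by $\procName{src}$ ascending and $\procName{pos}$ descending, the prefix scan assigns to every entry $(u, x, j) \in F'$ an $e.\procName{rank}$ equal to the number of entries with $\procName{src}=u$ and $\procName{pos} > j$. Consequently, when $\{u,v\} = w_i \in W$, the entry $(u,v,i)$ has rank exactly $\rank(u \to v)$ in the sense of the definition; and when $\{u,v\} \notin W$, $\rank(u \to v)=\deg_H(u)$, so the condition $e.\procName{rank} < \rank(u \to v)$ is satisfied by every entry with $\procName{src}=u$.

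Next I would split into two cases. If $f_1 = w_i \in W$, the above characterization shows that $(u, x, j) \in L$ iff $\procName{src}=u$ and $j > i$; the undirected edges underlying $L$ are therefore exactly the $W$-edges containing $u$ that arrived after $f_1$. The symmetric statement for $R$ (with $v$ in place of $u$) then gives $\{e \in W : e \cap \{u,v\} \neq \emptyset,\ e >_{\stream'} f_1\} = \Nbr_W(f_1)$. If instead $f_1 \in E$, then $L = \{e \in F' : \procName{src}=u\}$ and $R = \{e \in F' : \procName{src}=v\}$, whose underlying undirected edges are all $W$-edges incident on $\{u,v\}$; these coincide with $\Nbr_W(f_1)$ since every $W$-edge arrives after every $E$-edge under $\leq_{\stream'}$.

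The one place requiring care---and what I expect to be the only real subtlety---is the passage from directed entries back to undirected edges. Each $e \in W$ contributes exactly two entries to $F'$, one per orientation; an incident edge $e = \{u, x\}$ with $x \notin \{u,v\}$ contributes only one entry to $L \cup R$, so no double counting occurs for such edges. The corner case is $e = \{u, v\} \in W$, which forces $e = f_1$: then the entries $(u,v,i)$ and $(v,u,i)$ have ranks exactly equal to (not strictly less than) $\rank(u \to v)$ and $\rank(v \to u)$, so $f_1$ itself correctly stays out of $L \cup R$. Verifying this boundary condition is the one place where the strict inequality in the definitions of $L$ and $R$ plays an essential role.
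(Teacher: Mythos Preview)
Your argument is correct. The paper does not actually prove this statement: it is introduced as an ``easy-to-verify observation'' and left without proof, so there is nothing to compare against. Your case split on $f_1 \in W$ versus $f_1 \in E$ is the natural way to unwind the two branches of the rank definition, and your handling of the boundary case $e = \{u,v\}$ (using the strict inequality to exclude $f_1$ itself, together with the standing assumption that each edge arrives exactly once so that $\{u,v\} \in W$ forces $e = f_1$) is exactly the subtlety that needs checking. One tiny wrinkle you might mention for completeness: the paper's definition of $\rank(x \to y)$ is stated only for $x,y \in V_W$, so when $f_1 \in E$ and, say, $u \notin V_W$, the quantity $\rank(u \to v)$ is formally undefined; but then no entry of $F'$ has $\procName{src}=u$, so $L = \emptyset$ regardless, and the conclusion still holds.
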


To elaborate further, we note that there are $\rank(u \to v)$ edges in
$\Nbr_W(f_1)$ incident on $u$ (those in $L$) and $\rank(v \to u)$ edges, on $v$
(those in $R$).  As an immediate consequence, we have that $\adjdeg^{+} =
\rank(u \to v) + \rank(v \to u)$.  Since $\adjdeg^{-}$ is already known, we can
readily compute the new $\adjdeg$ value as $\adjdeg = \adjdeg^{+} +
\adjdeg^{-}$, as noted before.

Furthermore, we will use this as a ``naming system'' for identifying which
level-2 edge to pick. For a level-1 edge $f_1$, there are $\adjdeg^{+}$
candidates in $\Nbr_W(f_1)$.  We will associate each number $\phi \in \{0, 1,
\dots, \adjdeg^{+}-1\}$ with an edge in $\Nbr_W(f_1)$ as follows: if $\phi <
\rank(u \to v)$, we associate it with the edge $\procName{src} = u$ and
$\procName{rank} = \phi$; otherwise, associate it with the edge $\procName{src}
= v$ and $\procName{rank} = \phi - \rank(u \to v)$.
We give two examples in Figure~\ref{fig:example-rank-to-substream} to illustrate
the naming system.

\begin{figure}
  \centering
  {\sffamily
  \begin{tabular}{ r c c || c }
    \toprule
    \footnotesize $\phi$:& 0 & 1 & \\
    \footnotesize src: & D & D & \\
    \footnotesize \textsf{rank:} & 0 & 1 & \\
    \midrule
    \footnotesize real edge:& DF & BD & \\
    \bottomrule
  \end{tabular}
}
  {\sffamily
  \begin{tabular}{ r c c || c }
    \toprule
    \footnotesize $\phi$:& 0 & 1 & 2 \\
    \footnotesize src: & C & C & E\\
    \footnotesize \textsf{rank:} & 0 & 1 & 0\\
    \midrule
    \footnotesize real edge:& CD & CB & EF\\
    \bottomrule
  \end{tabular}
}
\caption{Translating rank values to substreams: (Left) $f_1 = \textsf{\small
    DC}$ using $u = \textsf{\small D}$ and $v = \textsf{\small C}$, so
  $\adjdeg^{+} = 2$. (Right) $f_1 = \textsf{\small CE}$ using $u =
  \textsf{\small C}$ and $v = \textsf{\small E}$, so $\adjdeg^{+} = 3$.}
\label{fig:example-rank-to-substream}
\end{figure}


\medskip
\noindent\textbf{Putting them together.}
Armed with these ingredients, we are ready to sketch an algorithm for
maintaining level-2 edges and $\adjdeg$.  We will also rely on multisearch
queries of the following forms (which can be supported by
Lemma~\ref{lem:ms-cost})---\emph{(Q1)} Given $(u, p)$, locate an edge with
$\procName{src} = u$ with the least \procName{pos} larger or equal to $p$; and
\emph{(Q2)} Given $(u, a)$, locate an edge with $\procName{src} = u$ with the
rank exactly equal to $a$.

First, construct a length-$r$ array $\adjdeg^{-}[]$ of $\adjdeg^{-}$ values by
copying from $\est_i.\adjdeg$ (using \procName{map}). Then, for $i = 1, \dots,
r$, compute $\mathit{ld}[i] = \rank(u \to v)$ and $\mathit{rd}[i] = \rank(v \to
u)$, where $\{u, v\} = \est_i.f_1$; this can be done using two multisearch calls
each involving $r$ queries of the form \emph{(Q1)}\footnote{As an optimization
  to save time, for estimators whose level-1 edge did not get replaced, we query
  for $p = -1$; this will turn up the edge with the largest rank incident on
  that \procName{src}.}  This is sufficient information to compute the
length-$r$ array $\adjdeg^{+}[]$, where $\adjdeg^{+}[i] = \mathit{ld}[i] +
\mathit{rd}[i]$ (using \procName{combine}).

Now we apply \procName{map} to flip a coin for each estimator deciding whether
it will take on a new level-2 edge from $W$.  Using the same \procName{map}
operation, for every estimator $\est_i$ that is replacing its level-2 edge, we
pick a random number between $0$ and $\adjdeg^{+}[i] - 1$ (inclusive), which
maps to an edge in the substream $\Nbr_W(\est_i.f_1)$ using the naming scheme
above. To convert this number to an actual edge, we perform a multisearch
operation involving at most $r$ queries of the form \emph{(Q2)}, which completes
Step 2.

\subsection{Step 3: Locate  Closing Edges}
The goal of Step 3 is to detect, for each estimator, if the wedge formed by
level-1 and level-2 edges closes using a edge from $W$.
For this final step, we construct a length-$|W|$ array \procName{edges}, where
each $w_i = \{u, v\} \in W$ ($u < v$) yields a record $\{\procName{src} = u,
\procName{dst} = v, \procName{pos} = i\}$. We then sort this by
$\procName{src}$, then by $\procName{dst}$.  After that, with a \procName{map}
operation, we compute the candidate closing edge and use a multisearch on
$\procName{edges}$ with at most $r$ queries of the form ``Given $(u, v)$, locate
the edge $(u, v)$'' This allows us to check if the candidates are present and
come after the level-2 edge, by checking their \procName{pos} field.

\subsection{Cost Analysis}
Using the terminology of Theorem~\ref{thm:pco-block-cost}, we let $s = |W|$ and
$r$ be the number of estimators we maintain. Step 1 is implemented using
\procName{map}, \procName{extract}, and \procName{combine}. 
Therefore, the cache cost for Step 1 is at most $O(\sortAlg(r) +
\scanAlg(r+s))$, and the depth is at most $O(\log^2 (r+s))$.
For Step 2, the cost of running \procName{rankAll} is $O(\sortAlg(s)) =
O(\tfrac{s}{\BB}\log_{\MM/\BB}(1 + s/\BB))$ memory cost and $O(\log^2 s)$ depth.
The cost of the other operations in Step 2 is dominated by $O(\sortAlg(r) +
\sortAlg(s))$. The total cost for Step 2 is $O(\sortAlg(r) + \sortAlg(s))$
memory cost and at most $O(\log^2 (r+s))$ depth.  Likewise, in Step 2, the cost
of sorting dominates the multisearch cost and \procName{map}, resulting in a
total cost of $O(\sortAlg(r) + \sortAlg(s))$ for memory cost and at most
$O(\log^2 (r+s))$ depth.

Hence, the total cost of the \procName{bulkUpdateAll} is
$O(\sortAlg(r)+\sortAlg(s))$ memory cost and $O(\log^2 (r+s))$ depth, as promised.




\section{Evaluation}
\label{sec:exp}

We implemented the algorithm described in Section~\ref{sec:par-tri-count} and
investigated its performance on real-world datasets in terms of accuracy,
parallel speedup, parallelization overhead, effects of batch size on throughput,
and cache behaviors.






\medskip

\noindent\textbf{Implementation.} We followed the description in
Section~\ref{sec:par-tri-count} rather closely.  The algorithm combines the
``coarse'' estimators into a sharp estimate using a median-of-means aggregate.
The main optimization we made was in avoiding malloc-ing small chunks of memory
often.  The \procName{sort} primitive uses a PCO sample sort
algorithm~\cite{BGS10,ShunEFGKST:spaa2012}, which offers good speedups. The
multisearch routines are a modified Blelloch~\etal's \procName{merge}
algorithm~\cite{BGS10}; the modification stops recursing early when the number
of ``queries'' is small.  Other primitives are standard. The main triangle
counting logic has about $600$ lines of Cilk code, a dialect of C/C++ with
keywords that allow users to specify what should be run in
parallel~\cite{IntelCilk} (fork/join-type code).  Cilk's runtime system relies
on a work-stealing scheduler, a dynamic scheduler that allows tasks to be
rescheduled dynamically at relatively low cost. The scheduler is known to impose
only little overhead on both parallel and sequential code.
Our benchmark programs were compiled with the public version of Cilk shipped
with GNU \texttt{g++} Compiler version 4.8.0 (20130109). We used the
optimization flag \texttt{-O2}.

\medskip

\noindent\textbf{Testbed and Datasets.}
We performed experiments on a $12$-core (with hyperthreading) Intel machine,
running Linux 2.6.32-279 (CentOS 6.3).  The machine has \emph{two} $2.67$ Ghz
$6$-core Xeon X5650 processors with $96$GB of memory although the experiments
never need more a few gigabytes of memory.

Our study uses a collection of graphs, obtained from 
the SNAP project at Stanford~\cite{snap} and a recent Twitter
dataset~\cite{KwakLPM:www10}.
We present a summary of these datasets in Table~\ref{tbl:datastat}. 
We simulate a streaming graph by feeding the algorithm with edges as they are
read from disk. As we note later on, disk I/O is not a bottleneck in the
experiment.

With the growth of social media in recent years, some of the biggest graphs to
date arise in this context. When we went looking for big graphs to experiment
with, most graphs that turned up happen to stem from this domain.  While this
means we experimented with the graphs that triangle counting is likely used for,
we will note that our algorithm does not assume any special property about it.


\begin{table*}[t]
\centering
\small
\begin{tabular}{l r r r r r r r}
  \toprule
  \textbf{\textsf{Dataset}} & $n$         & $m$           & $\Delta$  & $\tau$          & $m\Delta/\tau$ & Size \\
  \midrule
  Amazon                    & 334,863     & 925,872       & 1,098     & 667,129         & 1,523.85       & 13M  \\
  DBLP                      & 317,080     & 1,049,866     & 686       & 2,224,385       & 323.78         & 14M  \\
  LiveJournal               & 3,997,962   & 34,681,189    & 29,630    & 177,820,130     & 5,778.89       & 0.5G \\
  Orkut                     & 3,072,441   & 117,185,083   & 66,626    & 627,584,181     & 12,440.68      & 1.7G \\
  Friendster                & 65,608,366  & 1,806,067,135 & 5,214     & 4,173,724,142   & 2,256.22       & 31G  \\
  \midrule
  Powerlaw (synthetic) & 267,266,082 &  9,326,785,184 &  6,366,528 & - & - & 167GB\\
  \bottomrule
\end{tabular}
\caption{A summary of the datasets used in our experiments,
  showing for every dataset, the number of nodes ($n$), the number of edges ($m$),
  the maximum degree ($\Delta$), the number of triangles in the graph ($\tau$), the ratio $m\Delta/\tau$, and
  size on disk (stored as a list of edges in plain text).}
\label{tbl:datastat}
\end{table*}


For most datasets, the exact triangle count is provided by the source (which we
have verified); in other cases, we compute the exact count using an algorithm
developed as part of the Problem-Based Benchmark
Suite~\cite{ShunEFGKST:spaa2012}.  We also report the size on disk of these
datasets in the format we obtain them (a list of edges in plain
text)\footnote{While storing graphs in, for example, the compressed sparse-row
  (CSR) format can result in a smaller footprint, this set of experiments
  focuses on the settings where we do not have the luxury of preprocessing the
  graph.}.
In addition, we include one synthetic power-law graph; on this graph, we cannot
obtain the true count, but it is added to speed test the algorithm.






\medskip

\noindent\textbf{Baseline.} For accuracy study, we directly compare our results
with the true count.  For performance study, our baseline is a fairly optimized
version of the nearly-linear time algorithm based on neighborhood sampling, using
bulk-update, as described in~\cite{PavanTTW:vldb2013}. We use this baseline to establish the overhead of
the parallel algorithm. 
We do not compare the accuracy between the two algorithms because by design, they produce
the exact same answer given the same sequence of random bits.  The baseline code
was also compiled with \texttt{g++} version 4.8.0 \emph{but} without linking
with the Cilk runtime.

\subsection{Performance and Accuracy}
We perform experiments on graphs with varying sizes and densities.
Our algorithm is randomized and may behave differently on different runs.  For
robustness, we perform \emph{five} trials---except when running the biggest
datasets on a single core, where only two trials are used.
Table~\ref{tbl:overview-perf} shows for different numbers of estimators $r =
200\text{K}, 2\text{M}, 20\text{M}$, the accuracy, reported as the mean
deviation value, and processing times (excluding I/O) using $1$ and all $12$
cores (24 threads with hyperthreading), as well as the speedup ratio\footnote{A
  batch size of $16$M was used}.
%
Mean deviation is a well-accepted measure of error, which, we believe,
accurately depicts how well the algorithm performs.  In addition, it reports the
median I/O time\footnote{Like in the (streaming) model, the update routine to
  our algorithm takes in a batch of edges, represented as an array of a pair of
  \procName{int}'s. We note that the I/O reported is based on an optimized I/O
  routine, in place of the \procName{fstream}'s \procName{cin}-like
  implementation or \procName{scanf}.}  In another experiment, presented in
Table~\ref{tbl:par-overhead}, we compare our parallel implementation with the
baseline sequential implementation~\cite{PavanTTW:vldb2013}.  As is evident
from the findings, \emph{the I/O cost is not a bottleneck in any of the
  experiments, justifying the need for parallelism to improve throughput.}

\begin{table*}[tbh]
\centering
\small
\renewcommand{\tabcolsep}{2pt}
\begin{tabular}{l r r r r c  r r r r c r r r r c r}
  \toprule
  \multirow{2}{*}{\textbf{\textsf{Dataset}}} &
  \multicolumn{4}{c}{$r=200$K} & &
  \multicolumn{4}{c}{$r=2$M} & &
  \multicolumn{4}{c}{$r=20$M} & &
  \multirow{2}{*}{I/O}\\[0.1em]
  \cmidrule{2-5}
  \cmidrule{7-10}
  \cmidrule{12-15}
  \mbox{}&
  MD & $T_1$ & $T_{12H}$ & $\frac{T_1}{T_{12H}}$ &\tspc&
  MD & $T_1$ & $T_{12H}$ & $\frac{T_1}{T_{12H}}$ &\tspc&
  MD & $T_1$ & $T_{12H}$ & $\frac{T_1}{T_{12H}}$\\

  \midrule
  Amazon & $2.08$ & $1.07$ &$0.14$ &  $7.64$ & & $0.47$ & $3.58$ &$0.42$ &  $8.52$ & & $0.12$ & $29.00$ &$3.15$ &  $9.21$ & & 0.14 \\
DBLP & $1.18$ & $1.21$ &$0.17$ &  $7.12$ & & $0.39$ & $3.71$ &$0.43$ &  $8.63$ & & $0.08$ & $29.20$ &$3.14$ &  $9.30$ & & 0.15 \\
LiveJournal & $5.51$ & $35.10$ &$3.31$ &  $10.60$ & & $0.51$ & $41.00$ &$3.91$ &  $10.49$ & & $0.39$ & $69.80$ &$6.56$ &  $10.64$ & & 1.52 \\
Orkut & $2.39$ & $119.00$ &$11.20$ &  $10.62$ & & $0.16$ & $133.00$ &$12.40$ &  $10.73$ & & $0.66$ & $172.00$ &$15.40$ &  $11.17$ & & 5.00 \\
Friendster & $18.25$ & $1730.00$ &$181.00$ &  $9.56$ & & $8.35$ & $1970.00$ &$193.00$ &  $10.21$ & & $3.41$ & $2330.00$ &$219.00$ &  $10.64$ & & 86.00 \\

  \midrule
  Powerlaw & $-$ & $-$ &$-$ &  $-$ & & $-$ & $-$ &$-$ &  $-$ & & $-$ & $-$ &$1050.0$ &  $-$ & & $970.00$ \\
  \bottomrule
\end{tabular}
\caption{The accuracy (MD is mean deviation, \textbf{in percentage}), median processing time on $1$ core $T_1$ (\textbf{in seconds}), median processing time on $12$ cores $T_{12H}$ with hyperthreading (\textbf{in seconds}), and I/O time (\textbf{in seconds}) of our parallel  algorithm across five runs as the number of estimators $r$ is varied.}
\label{tbl:overview-perf}
\end{table*}


Several trends are evident from these experiments.  \emph{First, the algorithm
  is accurate with only a modest number of estimators.}  In all datasets,
including the one with more than a billion edges, the algorithm achieves less
than $4$\% mean deviation using about $20$ million estimators, and for smaller
datasets, it can obtain better than $5$\% mean deviation using fewer estimators.
As a general trend---though not a strict pattern---the accuracy improves with
the number of estimators.  Furthermore, in practice, far fewer estimators than
suggested by the pessimistic theoretical bound is necessary to reach a desired
accuracy. For example, on Twitter-2010, which has the highest $m\Delta/\tau$
ratio among the datasets, using $\vareps=0.0386$, the expression $96/\vareps^2
\cdot m\Delta/\tau$ (see Theorem~\ref{thm:mdelta-suffices}) is at least $6.6$
billion, but we reach this accuracy using $20$ million estimators.

\emph{Second, the algorithm shows substantial speedups on all datasets.}  On all
datasets, the experiments show that the algorithm achieves up to $11.24$x
speedup on $12$ cores, with the speedup numbers ranging between $7.5$x and
$11.24$x.  On the biggest datasets using $r = 20\text{M}$ estimators, the
speedups are consistently above $10$x.  This shows that the algorithm is able to
effectively utilize available cores except when the dataset or the number of
estimators is too small to fully utilize parallelism.
%
%
Additionally, we experimented with a big synthetic graph (167GB power-law graph)
to get a sense of the running time.  For this dataset, we were unable to
calculate the true count; we also cut short the sequential experiment after a
few hours. But it is worth pointing out that this dataset has $5$x more edges
than Friendster, and our algorithm running on 12 cores finishes in $1050$
seconds (excluding I/O)---about $5$x longer than on Friendster.

\begin{table*}[ht]
\centering
\small
\renewcommand{\tabcolsep}{2pt}
\begin{tabular}{l r r r c  r r r c r r r c}
  \toprule
  \multirow{2}{*}{\textbf{\textsf{Dataset}}} &
  \multicolumn{3}{c}{$r=200$K} & &
  \multicolumn{3}{c}{$r=2$M} & &
  \multicolumn{3}{c}{$r=20$M} \\[0.1em]
  \cmidrule{2-4}
  \cmidrule{6-8}
  \cmidrule{10-12}
  \mbox{}&
  $T_{\textrm{seq}}$ & $T_{1}$ & $\sfrac{T_1}{T_{\textrm{seq}}}$ &\tspc&
  $T_{\textrm{seq}}$ & $T_{1}$ & $\sfrac{T_1}{T_{\textrm{seq}}}$ &\tspc&
  $T_{\textrm{seq}}$ & $T_{1}$ & $\sfrac{T_1}{T_{\textrm{seq}}}$ \\
  \midrule
  Amazon & $0.95$ & $1.07$ &$1.13$ & & $5.25$ & $3.58$ &$0.68$ & & $34.80$ & $29.00$ &$0.83$ & \\
DBLP & $0.92$ & $1.21$ &$1.32$ & & $5.16$ & $3.71$ &$0.72$ & & $36.90$ & $29.20$ &$0.79$ & \\
LiveJournal & $14.00$ & $35.10$ &$2.51$ & & $34.90$ & $41.00$ &$1.17$ & & $84.40$ & $69.80$ &$0.83$ & \\
Orkut & $42.00$ & $119.00$ &$2.83$ & & $73.20$ & $133.00$ &$1.82$ & & $129.00$ & $172.00$ &$1.33$ & \\
Friendster & $787.00$ & $1730.00$ &$2.20$ & & $1040.00$ & $1970.00$ &$1.89$ & & $1580.00$ & $2330.00$ &$1.47$ & \\

  \bottomrule
\end{tabular}
\caption{The median processing time of the sequential algorithm $T_{\text{seq}}$ (\textbf{in seconds}), 
  the median processing time of the parallel algorithm running on $1$ core $T_1$ (\textbf{in seconds}), 
  and the overhead factor (i.e., $T_1/T_{\text{seq}}$).}
\label{tbl:par-overhead}
\end{table*}


\emph{Third, the overhead is well-controlled.}  Both the sequential and parallel
algorithms, at the core, maintain the same neighborhood sampling invariant but
differ significantly in how the edges are processed. 
As is apparent from Table~\ref{tbl:par-overhead}, for large datasets
requiring more estimators, the overhead is less than $1.5$x with $r =
20\text{M}$.  For smaller datasets, the overhead is less than $1.6$x with $r =
2\text{M}$. In all cases, the amount of speedup gained outweighs the overhead.

\subsection{Additional Experiments}

The previous set of experiments studies the accuracy, scalability, and overhead
of our parallel algorithm, showing that it scales well and has low overhead.  We
report on another set of experiments designed to help understand the algorithm's
behaviors in more detail.

\medskip

\begin{figure}[thb]
  \centering
  \includegraphics[width=.5\columnwidth]{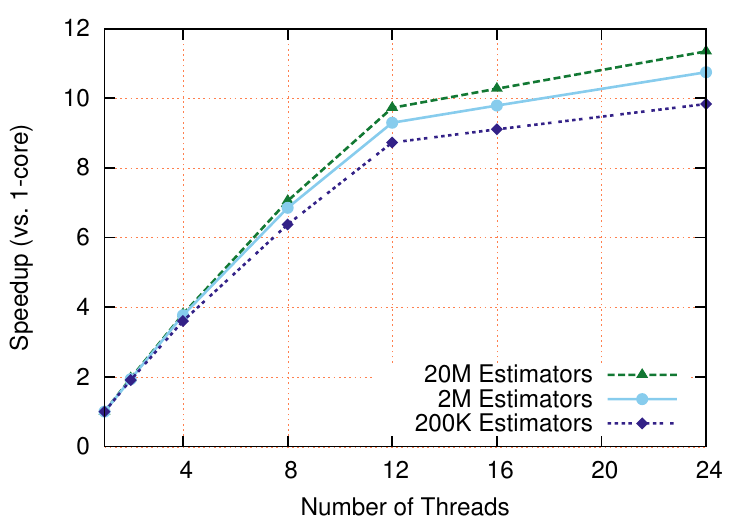}
	\caption{Speedup on the Friendster dataset as the number of cores and the number of estimators are varied.}
  \label{fig:speedup-vs-core}
\end{figure}

\noindent\textbf{Scalability.} To further our understanding of the algorithm's
scalability, we take the biggest real dataset (Friendster) and examine its
speedup behavior as the number of threads is varied.
Figure~\ref{fig:speedup-vs-core} shows, for $r = 200\mbox{K}, 2\mbox{M},
20\mbox{M}$, the speedup ratio with respect to the parallel code running on $1$
core. On a $12$-core machine, the speedup grows linearly with the number of
threads until $12$ threads (the number of physical cores); the growth slows down
after that but keeps rising until $24$ threads since we are now running on
hyperthreaded cores.  The trend seems invariable with the number of estimators
$r$ although the speedup ratio improves with larger $r$ since there is more work
for parallelism.

As a next step, we attempt to understand how the speedup is gained and how it is
likely to scale in the future on other machines.  For this, we examine the
breakdown of how the time is spent in different components of the algorithm.
Figure~\ref{fig:breakdown} shows the fractions of time spent inside
\procName{sort}, multisearch routines, and other components for the $3$ biggest
datasets. 
The figure shows that the majority of the time is spent on sorting (up to
$94\%$), which has great speedups in our setting. The multisearch portion makes
up less than $5\%$ of the running time. The remaining $1\%$ is spent in
miscellaneous bookkeeping.  Because we can directly use an off-the-shelf sorting
implementation, this main portion will improve with the performance of
\procName{sort}.  We expect the algorithm to continue to scale well on machines
with more cores, as cache-efficient sort has been shown to
scale~\cite{ShunEFGKST:spaa2012}.


\begin{figure}[tb]
  \centering
  \includegraphics[width=0.65\columnwidth]{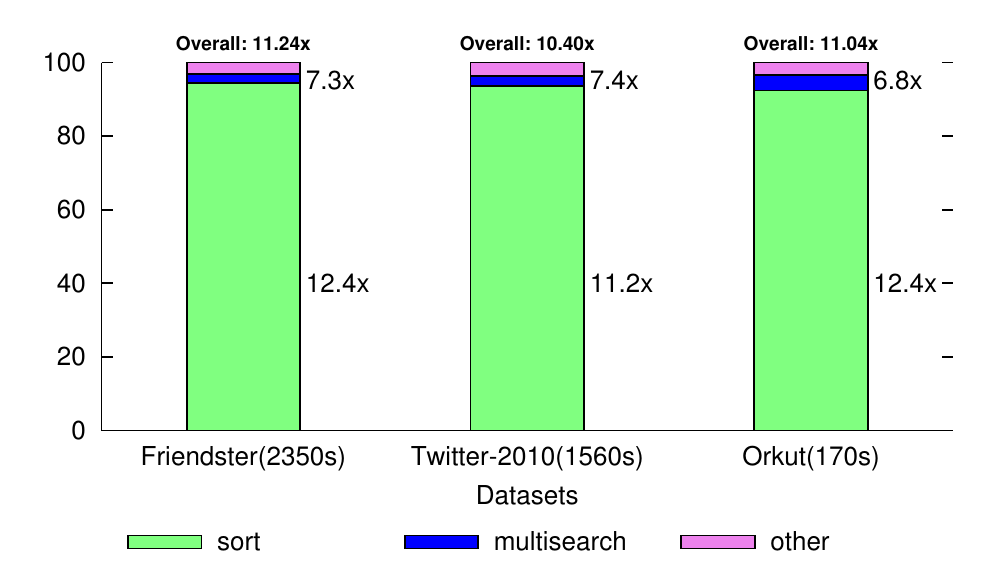}
  \caption{The timing breakdown (\textbf{in percentage of time}) of the parallel algorithm on the $3$ largest
    datasets running on $1$ core.  The number in parenthesis next to each
    dataset is the $T_1$ (\textbf{in seconds}) whereas the numbers next to the
    columns are the speedup ratio for the corresponding components running with
    24 threads (all 12 cores with hyperthreading).} \label{fig:breakdown}
\end{figure}

\medskip
\noindent
\textbf{Batch Size vs. Throughput.}  How does the choice of a batch size affect
the overall performance of our algorithm?  We study this question empirically by
measuring the sustained throughput as the batch size is varied.
Figure~\ref{fig:xput-vs-batchsize} shows the results on the dataset Orkut
running with $r = 2\mbox{M}$ and $24$ threads.  Orkut was chosen so that we
understand the performance trend on medium-sized graphs, showing that our scheme
can benefit not only the largest graphs but also smaller ones.  We expect
intuitively that the throughput improves with the batch size, and the experiment
confirms that, reaching about $9.3$ million edges per second using a batch size
of $16\mbox{M}$.
\begin{figure}[th]
  \centering
  \includegraphics[width=0.5\columnwidth]{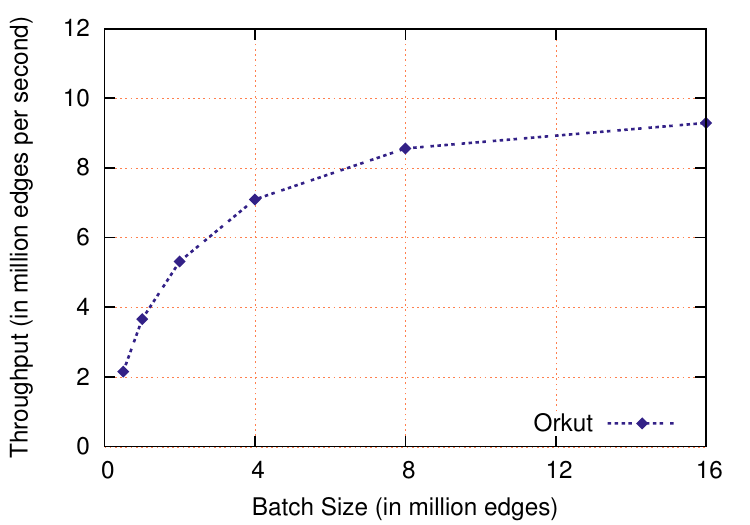}
  \caption{The sustained throughput the algorithm with varying batch sizes on
    the graph Orkut using $r = 2\mbox{M}$ and $24$ threads.}
  \label{fig:xput-vs-batchsize}
\end{figure}

\medskip

\begin{figure}[th]
  \centering
  \includegraphics[width=0.5\columnwidth]{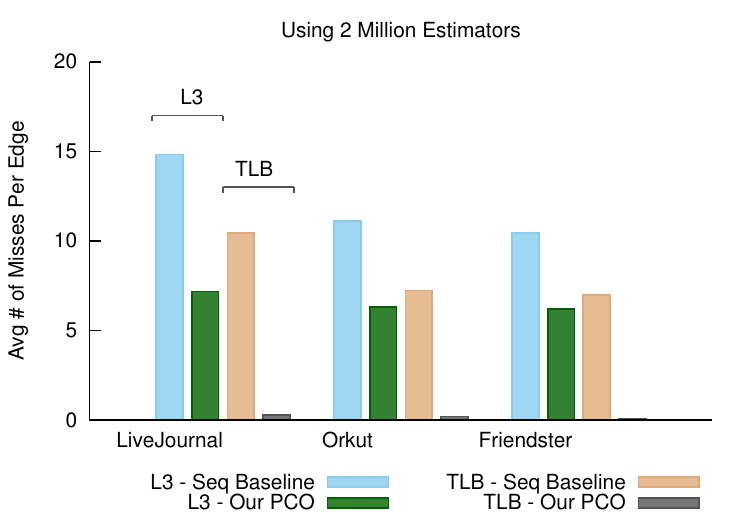}
  \caption{The average number of L3 cache misses per edge and the average number
    of TLB misses per edge as reported by \texttt{perftool} on representative
    datasets running with $r = 2$M estimators. Both algorithms were run
    single-threaded.}
  \label{fig:faults-count}
\end{figure}

\noindent\textbf{Cache Behavior.}
We also study the memory access cost of the parallel cache-oblivious algorithm
compared to the sequential baseline, which uses hash tables and performs a
number of random accesses.  Both algorithms were run on a single core with the
same number of estimators (hence roughly the same memory footprint).  For this
study, we consider the number of L3 misses and TLB (Translation Lookaside
Buffer) misses.  Misses in these places are expensive, exactly what a
cache-efficient algorithm tries to minimize.  In Figure~\ref{fig:faults-count},
we show the numbers of L3 and TLB misses normalized by the number of edges
processed; the normalization helps compares results from datasets of different
sizes.  From this figure, it is clear that our cache-efficient scheme has
significantly lower cache/TLB misses.

\section{Conclusion}
\label{sec:concl}

We have presented a cache-efficient parallel algorithm for approximating the
number of triangles in a massive streaming graph.  The proposed algorithm is
cache-oblivious and has good theoretical performance and accuracy guarantees. We
also experimentally showed that the algorithm is fast and accurate, and has low
cache complexity. It will be interesting to explore other problems at the
intersection of streaming algorithms and parallel processing.


{\small
 \subsubsection*{Acknowledgments}
 Tangwongsan was in part sponsored by the U.S. Defense Advanced Research Projects
Agency (DARPA) under the Social Media in Strategic Communication (SMISC)
program, Agreement Number W911NF-12-C-0028. The views and conclusions contained
in this document are those of the author(s) and should not be interpreted as
representing the official policies, either expressed or implied, of the
U.S. Defense Advanced Research Projects Agency or the U.S. Government. The
U.S. Government is authorized to reproduce and distribute reprints for
Government purposes notwithstanding any copyright notation hereon. Pavan was
sponsored in part by NSF grant 0916797; Tirthapura was sponsored in part by
NSF grants 0834743 and 0831903.

}

\setlength{\bibsep}{2pt}

{
\bibliographystyle{alpha}
\bibliography{thispaper}
}

\appendix
\section{More Detailed  Models}
\label{sec:wd-pco}

Parallel algorithms in this work are expressed in the nested parallel model.  It
allows arbitrary dynamic nesting of parallel loops and fork-join constructs but
no other synchronizations, corresponding to the class of algorithms with
series-parallel dependency graphs. 
In this model, computations can be recursively decomposed into tasks, parallel
blocks, and strands, where the top-level computation is always a task:
\begin{itemize}[itemsep=1pt,topsep=1pt]
\item The smallest unit is a \defn{strand} $\Qs$, a serial sequence of instructions
  not containing any parallel constructs or subtasks.
\item A \defn{task} $\Qt$ is formed by serially composing $k \geq 1$ strands
  interleaved with $k-1$ parallel blocks, denoted by $\Qt =
  \Qs_1;\Qb_1;\dots;\Qs_k$.
\item A \defn{parallel block} $\Qb$ is formed by composing in parallel one or
  more tasks with a fork point before all of them and a join point after,
  denoted by $\Qb = \Qt_1\|\Qt_2\|\dots\|\Qt_k$. A parallel block can be, for
  example, a parallel loop or some constant number of recursive calls.
\end{itemize}
The \defn{depth} (aka. \defn{span}) of a computation is the length of the
longest path in the dependence graph.


\begin{figure}[th]
  \centering
  \includegraphics[scale=0.75]{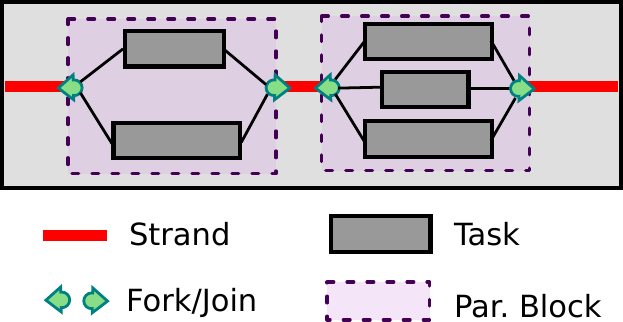}
  \caption{Expressing a computation as tasks, strands and parallel
    blocks}\label{fig:tasks}
\end{figure}


%
We analyze memory-access cost of parallel algorithms in the Parallel Cache
Oblvivious (PCO) model~\cite{BFGS11}, a parallel variant of the cache oblivious
(CO) model.  The Cache Oblivious (CO) model~\cite{FLPR99} is a model for
measuring cache misses of an algorithm when run on a single processor machine
with a two-level memory hierarchy---one level of finite cache and unbounded
memory.  The cache complexity measure of an algorithm under this model $\CoQ{n;
  \MM{},\BB{}}$ counts the number of cache misses incurred by a problem instance
of size $n$ when run on a fully associative cache of size $\MM{}$ and line size
$\BB{}$ using the optimal (offline) cache replacement policy.

Extending the CO model, the PCO model gives a way to analyze the number of cache
misses for the tasks that run in parallel in a parallel block.  The PCO model
approaches it by (i) ignoring any data reuse among the parallel subtasks and
(ii) assuming the cache is flushed at each fork and join point of any task that
does not fit within the cache.

More precisely, let $\loc(\Qt{};\BB{})$ denote the set of distinct cache lines
accessed by task $\Qt{}$, and $S(\Qt{};\BB{}) = |\loc(\Qt{};\BB{})| \cdot \BB{}$
denote its size. Also, let $s(\Qt; \BB) = |\loc(\Qt;\BB)|$ denote the size in
terms of number of cache lines.  Let $Q(\Qc{};\MM{},\BB{};\kappa)$ be the cache
complexity of $\Qc{}$ in the sequential CO model when starting with cache state
$\kappa$.

\begin{definition}[Parallel Cache-Oblivious Model]\label{def:qstar}
  For cache parameters $\MM$ and $\BB$ the \defn{cache complexity} of a strand,
  parallel block, and a task starting in cache state $\kappa$ are defined
  recursively as follows (see~\cite{BFGS11} for detail).
  \begin{itemize}[topsep=2pt]
  \item For a strand, $\cc(\Qs{};\MM{},\BB{};\kappa) =
    Q(\Qs{};\MM{},\BB{};\kappa)$.

  \item For a parallel block $\Qb{} = \Qt_1\|\Qt_2\|\ldots\|\Qt_k$,
    $\cc(\Qb{};\MM{},\BB{};\kappa) = \sum_{i=1}^{k}
    \cc(\Qt{}_i;\MM{},\BB{};\kappa)$.

  \item
    For a task $\Qt = \Qc_1;\dots;\Qc_k$,\\
    $\cc(\Qt{};\MM{},\BB{};\kappa)$ $= \sum_{i=1}^{k}\cc(\Qc_i;\MM{},\BB{};\kappa_{i-1})$,
    where $\kappa_i = \emptyset$
    if $S(\Qt;\BB) >\MM$, and $\kappa_i = \kappa\cup_{j=1}^{i}\loc(\Qc_j;\BB)$ if
    $S(\Qt;\BB) \leq \MM$.
  \end{itemize}
\end{definition}

We use $\cc(\Qc{};\MM{},\BB{})$ to denote a computation $\Qc{}$ starting with an
empty cache and overloading notation, we write $\cc(n;\MM{},\BB{})$ when $n$ is a
parameter of the computation.
We note that $\cc(\Qc{};\MM{},\BB{}) \geq Q(\Qc{};\MM{},\BB{})$.  That is, the
PCO gives cache complexity costs that are always at least as large as the CO
model.  Therefore, any upper bound on the PCO is an upper bound on the CO model.
Finally, when applied to a parallel machine, $\cc$ is a ``work-like'' measure
and represents the total number of cache misses across all processors.  An
appropriate scheduler is used to evenly balance them across the processors.

\section{Scan With Resets}
\label{sec:scan-with-resets}
The following technique is standard (see, e.g., \cite{Jaja:book92}); we only
present it here for reference. We describe how to implement a parallel prefix
sum operation with reset (also known as segmented scan), where the input is a
sequence $A$ such that $A_i \in \{\textsf{1}, \emptyset\}$. Sequentially, the
desired output can computed by the following code:
\begin{quote}
\begin{mlcodea}
sum = 0\\
for i = 1 to |A| do\\
\>\>out[i] = sum\\
\>\>if (A[i] == $\emptyset$) sum = 0 else sum += A[i]\\
end\\
\end{mlcodea}
\end{quote}
\vspace{-0.3cm} 
That is, we keep a accumulator which is incremented every time a $\textsf{1}$ is
encountered and reset back to $0$ upon encountering $\emptyset$.  This process
is easy to parallelize using \procName{scan} with a binary associative operator
such as the following. Let $\oplus: D \times D \to D$, where $D = \Z \cup
\{\emptyset\}$, be given by
\begin{align*}
  \oplus(x, y) = \begin{cases}
    x + y & \text{ if } x \neq \emptyset \text{ and } y \neq \emptyset\\
    y & \text{ if } y \neq \emptyset\\
    0 & \text{ otherwise }
  \end{cases}
\end{align*}



\end{document}